\newtheorem{theorem}{Theorem}[section]
\newtheorem{lemma}[theorem]{Lemma}
\newtheorem{proposition}[theorem]{Proposition}
\newtheorem{cor}[theorem]{Corollary}
\theoremstyle{definition}
\newtheorem{definition}[theorem]{Definition}
\theoremstyle{remark}
\newtheorem{remark}[theorem]{Remark}
\numberwithin{equation}{section}
\newcommand{\ww}{{\mathcal{W}}}
\newcommand{\G}{{\mathcal{G}}}
\begin{document}
\phantom{a}
\vspace{-1.5cm}
\title[Deformation quantization of contact manifolds]{Deformation quantization of contact manifolds}

\author{Boris M. Elfimov}
\address{Physics Faculty, Tomsk State University, Lenin ave. 36, Tomsk 634050, Russia}
\email{e1fimov@mail.ru}
\author{ Alexey A. Sharapov }
\address{Physics Faculty, Tomsk State University, Lenin ave. 36, Tomsk 634050, Russia}
\email{sharapov@phys.tsu.ru}
\thanks {The work was supported 
by  the Foundation for the Advancement of Theoretical Physics and Mathematics ``BASIS''}


\subjclass[2010]{Primary 53D55; Secondary 53D35}


\keywords{Contact manifolds, Jacobi brackets, deformation quantization, quantum anomalies, invariants of contact manifolds}

\begin{abstract}
We extend Fedosov deformation quantization to general contact manifolds. Unlike the case of symplectic manifolds, not every classical observable on a contact manifold is generally quantized. On examination of possible obstructions to quantization, we obtain some new invariants of contact manifolds.  
\end{abstract}

\maketitle


\section{Introduction}

In \cite{Fedosov:1994}, B. Fedosov proposed an elegant geometric method for deformation quantization of general symplectic and regular Poisson manifolds. Since then the method has been discussed, generalized, and applied by many authors in various contexts. 
In this paper, we propose yet another extension of Fedosov quantization, this time to the case of contact manifolds.

Contact manifolds form an important class of presymplectic manifolds where the kernel of a presymplectic form is one-dimensional. Given considerable interest in deformation quantization in both the physical and mathematical communities, it seems surprising that not much literature has been devoted to the deformation quantization of general presymplectic manifolds. Here we should mention the paper \cite{Vaisman_2002}, where the problem was addressed in the framework of symplectic ringed spaces. Among other things, the work \cite{Vaisman_2002} revealed some geometric obstructions to the existence of a generalized abelian connection, a key ingredient of Fedosov quantization.
In more explicit terms, these obstructions have also been discussed in our recent paper \cite{GorElfSha:2020}. From this viewpoint the case of contact manifolds is very special: due to the one-dimensionality of the kernel, the aforementioned obstructions are absent. This allows one to perform `half' of the Fedosov quantization procedure. The remaining half, related to the construction of quantum observables, may still run into obstacles. In other words, not every classical observable can be promoted to the quantum level, i.e., quantized. It is the presence of `unquantizable' classical observables that distinguishes  the deformation quantization of contact manifolds form that of symplectic ones.  Finding these obstructions in an explicit form and clarification of their geometric meaning is one of the purposes of the present paper. As a by-product, we obtain some new invariants of contact manifolds, which may be of interest in their own right. 

The problem of deformation quantization of contact manifolds is closely related to the quantum reduction of exact symplectic manifolds by  Hamiltonian constraints. Recall that a symplectic manifold $(N,\omega)$ is called exact if its symplectic form can be written as $\omega=d\lambda$ for some one-form $\lambda$. Under certain regularity conditions, the restriction of the symplectic potential $\lambda$ to a codimension-one submanifold $M\subset N$, defined by a single equation $H=0$, gives a contact structure on $M$.  The standard Hamiltonian reduction by the constraint $H=0$ leads then to the `reduced phase space' $N/\!/H$, which may not be a smooth manifold anymore. In this situation, one can define the algebra of quantum observables on $M$ by means of a suitable  quantum reduction of that on the ambient symplectic manifold $N$. As was shown in \cite[Sec. 7]{Bordemann_2000}, even in a nice case where $N/\!/H$ is a smooth manifold, the reduced quantum algebra may be `much smaller' than the algebra of smooth functions on $N/\!/H$. This is consistent with the general conclusion about possible abstractions to quantization on contact manifolds. 

The rest of the paper is organized as follows. In the next Sec.~2,
we briefly recall the basic definitions and constructions of contact geometry. In particular, we discuss the space of contact connections. Sec.~3 is concerned with two main ingredients of Fedosov deformation quantization: the Weyl algebra bundle and the abelian connection. In Sec.~4, we formulate the deformation quantization problem for contact manifolds and solve it by means of Fedosov's machinery. The final Sec.~5 is devoted to a closer examination of obstructions to quantization. As a by-product of our analysis, we obtain new invariants of contact manifolds: a character  $\chi$ and its periods over closed characteristics.

\section{Contact manifolds and connections}\label{S2}
Let $M$ be a smooth manifold of odd dimension $n=2m+1$. By $\Lambda(M)=\bigoplus \Lambda^p(M)$ and $V(M)=\bigoplus V_p(M)$ we denote, respectively, the exterior algebra of differential forms and the Gerstenhaber algebra of  polyvector fields on $M$. A one-form $\lambda\in \Lambda^1(M)$  is called {\it contact}, if $v=\lambda\wedge (d\lambda)^m$ is a volume form on $M$. Since $v\neq 0$ at each point of $M$,  the differential $\omega=d\lambda$ defines an exact presymplectic 
two-form on $M$ of rank $2m$. The pair $(M,\lambda)$ is called a {\it contact manifold} \cite{Blair}. 

Associated to the contact one-form $\lambda$ is the {\it Reeb vector field } $\xi\in V_1(M)$. This is defined uniquely by the equations 
$$
i_\xi\omega=0\,,\qquad i_\xi\lambda=1\,.
$$

The classical theorem of Darboux  states that every contact manifold has local coordinate charts $\{U; t, q^a, p_a\}_{a=1}^m$ which are canonical, in the sense that 
\begin{equation}\label{DC}
\lambda|_U=dt+p_adq^a\,,\qquad \omega|_U=dp_a\wedge dq^a\,,\qquad \xi|_U=\frac{\partial}{\partial t}\,. 
\end{equation}

The natural volume form  $v$ on a contact manifold defines canonical isomorphisms $\phi_p:\Lambda^p(M)\rightarrow V_{n-p}(M)$ between the spaces of differential forms and polyvector fields  of complementary degrees.  In particular, applying the map $\phi_{n-2}$ to the exterior product $\lambda\wedge \omega^{m-1}\in \Lambda^{n-2}(M)$ yields a bivector $\pi\in V_2(M)$. In terms of local coordinates $\{U; x^i\}_{i=1}^n$, one can write
\begin{equation}\label{defPi}
\pi^{ij}=2m\frac{\varepsilon^{jik_1\cdots k_{n-2}}\lambda_{k_1}\omega_{k_2k_3}\cdots\omega_{k_{n-3}k_{n-2}}}{\varepsilon^{l_1\cdots l_{n}}\lambda_{l_1}\omega_{l_2 l_3}\cdots\omega_{l_{n-1}l_n}}\,,\qquad \xi^i=\frac{\varepsilon^{ik_1\cdots k_{n-1}}\omega_{k_1k_2}\cdots\omega_{k_{n-2}k_{n-1}}}{\varepsilon^{l_1\cdots l_{n}}\lambda_{l_1}\omega_{l_2 l_3}\cdots\omega_{l_{n-1}l_n}}\,,
\end{equation}
where $\varepsilon^{i_1\cdots i_n}$ is the Levi-Civita symbol. It follows from the definition that $\lambda_i\pi^{ij}=0$ and the matrix $\pi^{ij}(p)$ has rank $2m$ at each point $p\in M$.  Let us introduce the tensor $P^i_j=\delta_j^i-\lambda_j\xi^i$, which defines a projector of rank $2m$
on the tangent bundle of $M$, i.e., $P^2=P$ and $\mathrm{Tr}P=2m$. 
Then the following equalities hold:
$$
P_j^i=\pi^{ik}\omega_{kj}\,,\qquad \lambda_iP^i_j=0\,,\qquad P_j^i\xi^j=0\,,\qquad P_i^k\omega_{kj}=\omega_{ij}\,,\qquad \pi^{ik}P_k^j=\pi^{ij}\,.
$$
In Darboux coordinate system (\ref{DC}), the bivector $\pi$ takes the form
$$
\pi|_U=\Big(\frac{\partial}{\partial q^a}-p_a\frac{\partial}{\partial t}\Big)\wedge \frac{\partial}{\partial p_a}\,.
$$
Computing now the Schouten--Nijenhuis brackets of $\pi$ and $\xi$, one can easily find 
\begin{equation}\label{Jac}
[\pi,\pi]=2\pi\wedge\xi\,,\qquad [\xi,\pi]=0\,.
\end{equation}
These relations identify the triple $(M,\pi,\xi)$ as a {\it Jacobi manifold} \cite{kirillov31local}, \cite{lichnerowicz1978varietes}. Thus, every contact manifold carries the canonical structure of a Jacobi manifold. Using the Jacobi structure, one can endow the space of smooth functions $C^\infty(M)$ with the following Lie bracket, called  also a {\it Jacobi bracket}:
\begin{equation}\label{bracket}
\{f,g\}=\pi(df,dg)+f\xi g -g\xi f\,.
\end{equation}
Relations (\ref{Jac}) ensure the Jacobi identity for the Lie bracket. The bracket (\ref{bracket}) is not a Poisson bracket as it fails to satisfy the Leibniz identity:
$$
\{f,gh\}-\{f,g\}h-g\{f,h\}=gh\xi f\,.
$$
Notice, however, that the Jacobi bracket gives rise to a Poisson algebra structure on the commutative subalgebra of $\xi$-invariant functions $C^\infty_\xi(M)\subset C^\infty(M)$.  

{By a deformation quantization of a contact manifold $(M,\lambda)$ we mean a deformation quantization of the Poisson algebra $\big(C^\infty_\xi(M), \{\,\cdot\,,\,\cdot\,\}\big)$.} As usual, this requires a suitable connection on the underlying manifold $M$. 

By definition \cite{LICHNEROWICZ1982}, a {\it contact connection} on $(M, \lambda)$ is a symmetric affine connection $\nabla$ such that 
$$
\nabla_X \omega=0\,,\qquad \nabla_X \xi=0\,\qquad \forall X\in V_1(M)\,.
$$
\begin{theorem}
On every contact manifold there exists a contact connection.
\end{theorem}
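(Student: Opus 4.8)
The plan is to construct a contact connection by a two-step averaging/projection procedure, starting from an arbitrary symmetric affine connection. First I would pick any torsion-free connection $\nabla^0$ on $M$; such connections always exist (e.g.\ the Levi--Civita connection of any Riemannian metric). The goal is to correct $\nabla^0$ by a tensor so that the resulting connection $\nabla = \nabla^0 + S$, with $S^i_{jk}$ symmetric in the lower indices, annihilates both $\omega$ and $\xi$. Write $\nabla^0_k \omega_{ij} = T_{kij}$ and $\nabla^0_k \xi^i = R_k{}^i$; these are the obstructions to $\nabla^0$ being contact, and I must solve $\nabla_k\omega_{ij}=0$, $\nabla_k\xi^i=0$ for $S$.

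\smallskip
Next, I would exploit the algebraic structure furnished by the tensors $\pi$, $\xi$, $\lambda$, $P$ introduced above, which give a direct-sum splitting $TM = \ker\lambda \oplus \langle\xi\rangle$ with $\omega$ nondegenerate on $\ker\lambda$. On the rank-$2m$ subbundle $\ker\lambda$, the situation is exactly that of a symplectic vector bundle, and the classical construction of a symplectic connection applies fiberwise: one symmetrizes an appropriate contraction of $\nabla^0\omega$ with $\pi$ to produce a correction killing $\nabla\omega$ along $\ker\lambda$. Concretely, the candidate is something like $S_{ijk} = \tfrac13(\nabla^0_i\omega_{jk}\text{-type terms})$ raised and symmetrized with $\pi$, a computation I will not grind through here; the projector $P$ is used to restrict all index manipulations to the contact distribution. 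Simultaneously the condition $\nabla\xi=0$ fixes the remaining components of $S$ involving the $\xi$-direction; because $i_\xi\omega=0$ and $[\xi,\pi]=0$, these two requirements are compatible rather than over-determined.

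\smallskip
The main obstacle I anticipate is \emph{compatibility}: verifying that the correction tensor designed to enforce $\nabla\omega = 0$ does not conflict with the one enforcing $\nabla\xi = 0$, and that the combined $S$ is genuinely symmetric in its lower two indices (torsion-freeness). This is where the identities relating $P$, $\pi$, $\omega$, $\xi$, $\lambda$ — in particular $P^k_i\omega_{kj}=\omega_{ij}$, $\pi^{ik}\omega_{kj}=P^i_j$, and $[\xi,\pi]=0$ — must be invoked carefully, together with $d\omega = 0$ (which controls the antisymmetrization of $\nabla^0\omega$). A clean way to organize this is to work in a local Darboux chart \eqref{DC}, where one can write down an explicit contact connection (for instance the flat connection $\partial$ is \emph{not} contact since $\partial_k\xi=0$ but $\partial_k\omega=0$ holds — so in Darboux coordinates the coordinate connection already works locally!), and then patch via a partition of unity; however, since the space of contact connections is an affine space over a module of tensors (the difference of two contact connections is a section of a vector bundle), a convex partition-of-unity argument glues the local solutions into a global one. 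I would present the intrinsic construction as the main argument and mention the Darboux-chart-plus-partition-of-unity route as the quick existence proof.
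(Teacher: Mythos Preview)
Your ``quick existence proof'' at the end --- local flat connections in Darboux charts glued by a partition of unity --- is exactly what the paper does, and it is the entire proof there. You have it right: in a Darboux chart $\omega$ and $\xi$ have constant components, so the connection with vanishing Christoffel symbols annihilates both; convex combinations of symmetric connections satisfying $\nabla\omega=0$, $\nabla\xi=0$ again satisfy these linear conditions, so a partition of unity patches the local connections globally. (Your parenthetical contains a slip: you write that the flat coordinate connection ``is \emph{not} contact'' and then list precisely the reasons it \emph{is}; presumably you changed your mind mid-sentence.)

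The intrinsic correction-tensor approach you put front and centre is a genuinely different route. It is the Lichnerowicz-style construction: start from an arbitrary torsion-free $\nabla^0$ and solve for $S$ with $S^i_{jk}=S^i_{kj}$ so that $\nabla^0+S$ kills $\omega$ and $\xi$. This is viable, but you have not actually carried it out --- ``a computation I will not grind through here'' is where the work lives, and the compatibility check (that the $\xi$-direction correction and the $\ker\lambda$-direction correction can be made simultaneously while keeping $S$ symmetric) is nontrivial, relying on $d\omega=0$ and $i_\xi\omega=0$ in a way you only gesture at. The payoff of the intrinsic route would be an explicit formula for a contact connection in terms of an arbitrary starting connection; the paper's Darboux argument trades that for brevity and gets existence in three lines. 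Since you already have the short proof in hand, lead with it.
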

\begin{proof}
Take a partition of unity $\{\varphi_\alpha\}_{\alpha\in A}$ subordinated to an atlas $\{U_\alpha\}_{\alpha\in A}$ of Darboux coordinate charts (\ref{DC}). In each coordinate chart $U_\alpha\subset M$, define a symmetric connection $\nabla^{(\alpha)}$ to have zero Christoffel symbols. Clearly, $$\nabla^{(\alpha)}\omega|_{U_\alpha}=0\,,\qquad \nabla^{(\alpha)}\xi|_{U_\alpha}=0\,.$$
The desired contact connection is now obtained by gluing the local connections $\nabla^{(\alpha)}$ with the help of the partition of unity:
$$
\nabla=\sum_{\alpha\in A} \varphi_\alpha\nabla^{(\alpha)}\,.
$$
\end{proof}
We say that a covariant tensor $T_{i_1\ldots i_p}$ of rank $p$ is {\it $\xi$-transverse}, if 
\begin{equation}\label{trans}
\xi^{i_k}T_{i_1\cdots i_k\cdots i_p}=0\qquad \forall k=1,\ldots,p\,.
\end{equation}
Denote by $\mathcal{T}^\xi(M)=\bigoplus\mathcal{T}^\xi_p(M)$ the tensor algebra of covariant $\xi$-transverse tensor fields on $M$. In particular, 
$\omega\in \mathcal{T}^\xi_2(M)$. Each space $\mathcal{T}^\xi_p(M)$ contains the subspace $\mathcal{S}^\xi_p(M)$ 
of fully symmetric covariant tensor fields of rank $p$. Combining the tensor product with the complete symmetrization of indices makes the space $\mathcal{S}^\xi(M)=\bigoplus_p \mathcal{S}^\xi_p(M)$ into a commutative algebra over $C^\infty(M)$. By definition, the algebra $\mathcal{S}^\xi(M)$ is naturally graded by the rank of symmetric tensors.

With the definitions above, we are ready to describe the space of all contact connections. 
\begin{theorem}
The contact connections form an affine space $\mathrm{Con}(M,\lambda)$ modelled on the linear space $\mathcal{S}^\xi_2(M)\oplus \mathcal{S}^\xi_3(M)$.

\end{theorem}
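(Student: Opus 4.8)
The plan is to fix once and for all a single contact connection $\nabla_0$, which exists by the previous theorem, and to write every symmetric affine connection as $\nabla=\nabla_0+S$, where the difference tensor $S=(S^k_{ij})$, $S^k_{ij}=S^k_{ji}$, ranges over the $C^\infty(M)$-module of symmetric $(1,2)$-tensor fields. Writing the conditions $\nabla\omega=0$ and $\nabla\xi=0$ out in coordinates and using $\nabla_0\omega=0$, $\nabla_0\xi=0$, one finds that $\nabla$ is again a contact connection if and only if
\begin{equation}\label{eq:Sconstr}
S^i_{kl}\xi^l=0\,,\qquad S^l_{ki}\omega_{lj}+S^l_{kj}\omega_{il}=0\,.
\end{equation}
Hence $\mathrm{Con}(M,\lambda)=\nabla_0+\mathcal{L}$, where $\mathcal{L}$ is the linear space of symmetric $(1,2)$-tensor fields solving \eqref{eq:Sconstr}, and the whole content of the theorem is a natural identification $\mathcal{L}\cong\mathcal{S}^\xi_2(M)\oplus\mathcal{S}^\xi_3(M)$.

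To produce it, I would first use the projector $P^i_j=\delta^i_j-\lambda_j\xi^i$ to split the contravariant index of $S$: put $T_{ki}:=\lambda_l S^l_{ki}$ and $S'^l_{ki}:=P^l_m S^m_{ki}$, so that $S^l_{ki}=\xi^l T_{ki}+S'^l_{ki}$ with the upper index of $S'$ annihilated by $\lambda$. Contracting the first equation in \eqref{eq:Sconstr} separately with $\lambda_l$ and with $P$ shows it to be equivalent to the two conditions that $T$ be $\xi$-transverse and that $S'$ be $\xi$-transverse in its (symmetric) pair of lower indices; so $T\in\mathcal{S}^\xi_2(M)$, while $S'$ is transverse in all three of its slots. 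The second equation in \eqref{eq:Sconstr} does not involve the $\xi^l T$ term at all, since $i_\xi\omega=0$, so it is a constraint on $S'$ alone.

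The next step is to lower the (now transverse) upper index of $S'$ with $\omega$, setting $\widehat S_{jki}:=\omega_{jl}S'^l_{ki}$; because $\ker\omega$ is spanned by $\xi$ this is invertible, $S'^l_{ki}=\pi^{lj}\widehat S_{jki}$, and $\widehat S$ is $\xi$-transverse in the first index (again because $i_\xi\omega=0$), $\xi$-transverse in the last two (because $S'$ is), and symmetric in the last two (because $S$ is symmetric in its lower indices). Using $P^i_j=\pi^{ik}\omega_{kj}$, the remaining equation in \eqref{eq:Sconstr} becomes the extra symmetry $\widehat S_{jki}=\widehat S_{ikj}$; since interchanging the 2nd and 3rd slots together with interchanging the 1st and 3rd slots generate all permutations of the three slots, $\widehat S$ is in fact totally symmetric, i.e.\ $\widehat S\in\mathcal{S}^\xi_3(M)$. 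Conversely, for any $(T,U)\in\mathcal{S}^\xi_2(M)\oplus\mathcal{S}^\xi_3(M)$ the tensor $S^l_{ki}:=\xi^l T_{ki}+\pi^{lj}U_{jki}$ is symmetric in $k,i$, and $\pi^{ij}\lambda_j=0$ together with $\pi^{ik}\omega_{kj}=P^i_j$ (which acts as the identity on the transverse $U$) shows that it solves \eqref{eq:Sconstr}. One checks that $S\mapsto(T,\widehat S)$ and $(T,U)\mapsto S$ are mutually inverse, which gives the required isomorphism.

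The only steps that are not purely formal are the two index computations — deriving the second, simpler form of the constraints in \eqref{eq:Sconstr}, and, above all, verifying that raising and lowering with the antisymmetric $\omega$ and with $\pi$ collapse \eqref{eq:Sconstr} to precisely the symmetry $\widehat S_{jki}=\widehat S_{ikj}$, with every sign and projector factor correctly accounted for. The main obstacle is therefore one of careful bookkeeping; once the fibrewise linear algebra is settled, smoothness of $\lambda,\omega,\xi,\pi,P$ makes the identification an isomorphism of $C^\infty(M)$-modules, so $\mathrm{Con}(M,\lambda)$ carries exactly the asserted affine structure.
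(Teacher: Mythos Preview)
Your argument is correct and follows essentially the same route as the paper: split the difference tensor $S^k_{ij}$ along $\lambda_k$ and $\omega$ to produce the pair $(S_{ij},S_{ijk})\in\mathcal{S}^\xi_2(M)\oplus\mathcal{S}^\xi_3(M)$, with inverse $S^k_{ij}=S_{ij}\xi^k+S_{ijl}\pi^{lk}$. The only difference is that you write out the constraints \eqref{eq:Sconstr} explicitly and verify the full symmetry and $\xi$-transversality step by step, whereas the paper simply asserts that ``both the tensors are fully symmetric and $\xi$-transverse'' and leaves the check to the reader.
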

\begin{proof}
As usual the difference of two affine symmetric connections $S=\nabla'-\nabla$ is given by a tensor $S_{ij}^k$ which is symmetric in $i$ and $j$. Define the pair of covariant tensors 
\begin{equation}\label{Sdef}
S_{ijk}=S_{ij}^l\omega_{lk}\,,\qquad S_{ij}=S_{ij}^k\lambda_k\,.
\end{equation}
Since $\nabla$ and $\nabla'$ respect $\omega$ and $\xi$, both the tensors are fully symmetric and $\xi$-transverse.
For a given $\nabla$, the assignment $S_{ij}^k\mapsto (S_{ij}, S_{ijk})$ defines a map 
\begin{equation}\label{h}
h_{\nabla}: \mathrm{Con}(M,\lambda)\rightarrow \mathcal{S}^\xi_2(M)\oplus\mathcal{S}^\xi_3(M)
\end{equation}
from the affine space of contact connections to the space of covariant tensor fields. One can easily check that the formula
\begin{equation}\label{S}
S_{ij}^k=S_{ij}\xi^k+S_{ijl}\pi^{lk}
\end{equation}
defines the inverse map. Hence, the map (\ref{h}) is bijective. 
\end{proof}
It is clear that each contact connection induces a connection in the space of covariant $\xi$-transverse tensor fields. In other words, the space  $\mathcal{T}^\xi_p(M)$ is invariant under the action of the covariant derivative $\nabla_X$ for all $X\in V_1(M)$. One can also see that two contact connections $\nabla$ and $\nabla'$ induce the same connection on $\mathcal{T}^\xi(M)$ if and only if $h_\nabla(\nabla'-\nabla)\in \mathcal{S}^\xi_2(M)$. We are lead to conclude  that the space of connections on $\xi$-transverse covariant tensor fields is isomorphic to the space $\mathcal{S}^\xi_3(M)$. 

The proof of the following  useful relations is left to the reader:
\begin{equation}\label{id}
\nabla_Xv=0\,,\quad \omega_{ik}\nabla_X\pi^{kj}=-\xi^j\nabla_X\lambda_i\,,\quad \nabla_X\pi^{ij}=(\pi^{ki}\xi^j-\pi^{kj}\xi^i)\nabla_X\lambda_k
\end{equation}
for all $X\in V_1(M)$.

\section{Weyl algebra bundle and Fedosov connection}
Given a contact manifold $(M,\lambda)$, consider the algebra $\mathcal{S}^\xi(M)\otimes \Lambda(M)$. This is given by the tensor product over  $C^\infty(M)$ of the exterior algebra of differential forms $\Lambda(M)$ and the algebra $\mathcal{S}^\xi(M)$ of fully symmetric, covariant tensor fields on $M$ satisfying the $\xi$-transversality condition (\ref{trans}). For better visualisation of algebra's elements it is convenient to introduce the set of formal commuting  variables $y^i$, $i=1,\ldots, n$, regarded as a local frame of sections in $T^\ast M$.  Then,  the algebra $ \mathcal{S}^\xi(M)\otimes \Lambda(M)$ is spanned by polynomials 
\begin{equation}\label{audx}
a_{j_1\cdots j_qi_1\cdots i_p}(x) y^{j_1}\cdots y^{j_q}dx^{i_1}\wedge \cdots\wedge dx^{i_p}
\end{equation}
in the $y$'s and $dx$'s, where the expansion coefficients $a_{j_1\cdots j_qi_1\cdots i_p}$ constitute a covariant tensor field  on $M$. The tensor is symmetric w.r.t. indices  $j_1,\ldots, j_q$, antisymmetric w.r.t.  $i_1,\ldots, i_p$, and satisfies the $\xi$-transversality condition $\xi^{j_1}a_{j_1\cdots j_qi_1\cdots i_{p}}=0$.  The number of symmetric and antisymmetric indices defines a natural bigrading in the  algebra $ \mathcal{S}^\xi(M)\otimes\Lambda(M) $, so that (\ref{audx}) is a homogeneous element of bidegree $(q,p)$. We will refer to the numbers $q$ and $p$ as the {\it $y$-degree} and {\it form degree}, respectively.

Using the contact structure on $M$, one can construct a one-parameter deformation of the graded-commutative algebra $ \mathcal{S}^\xi(M)\otimes \Lambda(M)$. This is defined by the Weyl--Moyal $\circ$-product:
\begin{equation}\label{c-prod}
    \begin{array}{rcl}
       a \circ b &=&  \displaystyle \left.  \exp \left(\frac{\nu}{2} \pi^{ij} \frac{\partial}{\partial y^i} \frac{\partial}{\partial z^j}\right)  a(x, y, dx) \wedge b(x, z, dx) \right|_{z=y} \\[5mm]
        &= &\displaystyle \sum_{n=0}^\infty \left(\frac{\nu}{2} \right)^n \frac{1}{n!} \pi^{i_1 j_1} \dots \pi^{i_n j_n} \frac{\partial^n a}{\partial y^{i_1} \dots \partial y^{i_n}} \wedge \frac{\partial^n b}{\partial y^{j_1} \dots \partial y^{j_n}} \,.
    \end{array}
\end{equation}
Here  $\nu \in \mathbb{R}$ is the deformation parameter and the tensor $\pi^{ij}$ is given by Eq. (\ref{defPi}). It is easy to see that the  $\circ$-product is associative and does not depend  on the choice of local coordinates. Since $a$ and $b$ are polynomial in $y$'s, the infinite sum above contains only a finite number of nonzero terms. Notice that the $\circ$-product does not respect the first grading unless one treats  $\nu$ as a  formal variable of $y$-degree $2$. Proceeding with this interpretation, we define the algebra $W=\mathbb{R}[\nu]\otimes \mathcal{S}^\xi(M)\otimes \Lambda(M)$ with the $\circ$-product (\ref{c-prod}) extended to $W$ by $\mathbb{R}[\nu]$-linearity. This yields a unital bigraded algebra $W=\bigoplus W_{q,p}$. The unit in  $W$ is the function on $M$ identically equal to $1$.  Let us also introduce the commutator of two elements of $W$ by 
$$
 [a,b]=a\circ b-(-1)^{p_1p_2}b\circ a
$$
for all  $a\in W_{\bullet, p_1}$ and  $b\in W_{\bullet, p_2}$. The commutator makes the $\mathbb{R}[\nu]$-vector space $W$
into a graded Lie algebra w.r.t. the form degree. It is easy to see that the centre of this algebra is generated over  $\mathbb{R}[\nu]$ by  the differential forms of  $\Lambda(M)$.

Associated to the $y$-degree is the descending filtration 
$$
{F}^kW\supset {F}^{k+1}W\,,\qquad {F}^kW=\bigoplus_{q\geq k}W_{q,\bullet}\,,\qquad k=0,1,\ldots
$$
The inverse limit with respect to this filtration,
$$
\ww=\lim_{\leftarrow}(W/{F}^k W)\,,
$$
gives the completion of the algebra $W$. By definition, the elements of the algebra $\ww$ are formal power series in the $y$'s and $\nu$ with coefficients in  $\Lambda(M)$. The algebra $\ww=\bigoplus \ww_{q,p}$ inherits the bigrading, filtration,  and unit of $W$. The centre of $\ww$ is given by $\Lambda(M)[\![\nu]\!]$. 

\begin{remark}
Geometrically, one can think of $\ww$ as the space of sections of an infinite-dimensional vector bundle $\mathcal{E}\rightarrow M$ associated with the tangent bundle $TM$. Formula (\ref{c-prod}) endows $\mathcal{E}$ with a fiberwise associative product, which then induces the associative $\circ$-product in the space of sections $\ww=\Gamma(\mathcal{E})$.  Since at each point $p\in M$ the associative product in $\mathcal{E}_p$ is essentially the Weyl--Moyal product associated with the constant Poisson bivector $\pi^{ij}(p)$, we will refer to  $\mathcal{E}$ as the {\it Weyl algebra bundle}, a term coined in \cite{Fedosov:1994}. 
\end{remark}

Following Fedosov, we also introduce the differential $\delta: \ww_{q,p}\rightarrow \ww_{q-1,p+1}$ of bidegree $(-1,1)$ defined by the formula
$$
\delta a=dx^i\wedge \frac{\partial a}{\partial y^i}=-\frac1\nu[\omega_{ij}y^idx^j, a]\,.
$$
Clearly,
$$
\delta^2=0\quad \mbox{and}\quad \delta (a\circ b)=(\delta a)\circ b+(-1)^{p}a\circ\delta b
$$
for all $a\in \ww_{\bullet, p}$ and $b\in \ww$. In order to describe the cohomology of the operator $\delta$ it is convenient to introduce the subalgebra $\ww^\xi\subset \ww$ of $\xi$-transverse differential forms  and the ideal $\ww^\lambda\subset \ww$ generated by the contact form $\lambda$. These are defined as follows:
\begin{equation}\label{WW}
\ww^\xi=\big\{\;a\in \ww\; \big|\; i_\xi a=0\;\big\}\,,\qquad \ww^\lambda=\big\{ \;\lambda a \;\big|\; \forall a\in \ww \;\big\}\,.
\end{equation}
(Notice that $\lambda a=\lambda \circ a$.) Then $\ww=\ww^\xi\oplus \ww^\lambda$ and each element $a\in \ww$ decomposes uniquely into the sum 
\begin{equation}\label{aa}
a=a^\xi+a^\lambda \,, \qquad a^\xi\in \ww^\xi\,, \quad a^\lambda\in \ww^\lambda\,.
\end{equation}
It is clear that $a^\lambda=\lambda i_\xi a$. Either subspace (\ref{WW}) is obviously invariant under the action of the differential $\delta$. As a result, each group of $\delta$-cohomology $$H_{q,p}(\ww)=\ker (\delta: \ww_{q,p}\rightarrow \ww_{q-1,p+1})/\delta \ww_{q+1,p-1}$$ splits into the direct sum $H_{q,p}(\ww)=H_{q,p}(\ww^\xi)\oplus H_{q,p}(\ww^\lambda)$ of subgroups associated with the subcomplexes $\ww^\xi$ and $\ww^\lambda$.   Define the operator  $\delta^\ast: \ww_{q,p}\rightarrow \ww_{q+1,p-1}$ by
    $$
    \delta^\ast=y^iP_i^j i_{\frac{\partial}{\partial{x^j}}}=y^i(\delta_i^j-\lambda_i\xi^j)i_{\frac{\partial}{\partial{x^j}}}\,.
    $$
It is easy to see  that $( \delta^\ast)^2=0$, $\delta^\ast \lambda=0$, and
    \begin{equation}\label{N}
    (\delta \delta^\ast +\delta^\ast \delta)a_{qp}= (q+p)a_{qp}-\lambda i_\xi a_{qp}
    \end{equation}
for any homogeneous element  $a_{qp}\in \ww_{q,p}$. Unlike $\delta$, the operator $\delta^\ast$ is not a derivation of the $\circ$-product. Define now the operator 
$\delta^{-1}: \ww_{q,p}\rightarrow \ww_{q+1,p-1}$ as  
$$
\delta^{-1}a^\xi_{qp}=\left\{\begin{array}{cc}
   \frac1{q+p}\delta^\ast a_{qp}^\xi \,, &  q+p>0\,;\\[2mm]
     0\,,& q=p=0\,;
\end{array}   \right.
\qquad 
\delta^{-1}a^\lambda_{qp}=\left\{\begin{array}{cc}
   \frac1{q+p-1}\delta^\ast a_{qp}^\lambda \,, &  q+p>1\,;\\[2mm]
     0\,,& q+p=1\,;
\end{array}   \right.
$$
here $a_{qp}=a_{qp}^\xi+a_{qp}^\lambda$ according to (\ref{aa}). Direct verification shows that 
\begin{equation}\label{HD}
a=(\delta \delta^{-1}+\delta^{-1}\delta)a+Pa\,,
\end{equation}
where the operator $P$ acts by the rule
\begin{equation}\label{P}
Pa=a(\nu,x,0,0)+\lambda (i_\xi a)(\nu,x, 0,0) \qquad \forall a(\nu, x, y, dx)\in \ww\,.
\end{equation}
Clearly, $P^2=P$ and $Pi_\xi=i_\xi P$. Formula (\ref{HD}), being similar to the classical Hodge--de Rham decomposition,  means
that the non-trivial $\delta$-cocycles are nested in the kernel of the projector $1-P$.  The space $\ker (1-P)$ is spanned by sums $a+\lambda b$ where $a, b \in C^\infty(M)[\![\nu]\!]$. On the other hand, the elements of the form $a+\lambda b$ are obviously non-trivial $\delta$-cocycles. Thus, we arrive at the following conclusion about the cohomology of the dg-algebra $(\ww, \delta)$. 

\begin{proposition}\label{P31}
$H_{0,0}(\ww)\simeq C^\infty(M)[\![\nu]\!]\simeq  H_{0,1}(\ww)$ and $H_{q,p}(\ww)=0 $ for all $q>0$ or  $p>1$. 
\end{proposition}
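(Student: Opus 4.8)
The plan is to read everything off the Hodge--de Rham-type identity \eqref{HD}, which on all of $\ww$ reads $\mathrm{id}=\delta\delta^{-1}+\delta^{-1}\delta+P$ with $P$ the projector \eqref{P}. The strategy is to show that $(\mathrm{im}\,P,0)$ is a deformation retract of the complex $(\ww,\delta)$, so that $H_{q,p}(\ww)\cong(\mathrm{im}\,P)_{q,p}$, and then simply to inspect the right-hand side bidegree by bidegree using the explicit formula \eqref{P}.

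First I would record the two structural facts about $P$ that make the retraction work. (i) $\mathrm{im}\,P\subseteq\ker\delta$: every element of $\mathrm{im}\,P$ has the form $f+\lambda g$ with $f,g\in C^\infty(M)[\![\nu]\!]$, hence carries no $y$-dependence, so $\delta(f+\lambda g)=dx^i\wedge(\partial/\partial y^i)(f+\lambda g)=0$. (ii) $P\delta=0$ on all of $\ww$: in $P(\delta b)$ the evaluation at $dx=0$ kills the first summand of \eqref{P} because $\delta b$ has positive form degree, and it also kills the second summand $\lambda\,(i_\xi\delta b)|_{y=dx=0}$ unless $\delta b$ has form degree $1$ and $y$-degree $0$, i.e.\ unless $b\in\ww_{1,0}$; but for $b=b_j(\nu,x)y^j$ one has $i_\xi\delta b=\xi^jb_j=0$ precisely by the $\xi$-transversality condition \eqref{trans}. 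Granting (i)--(ii): for $\delta$-closed $a$, formula \eqref{HD} gives $a-Pa=\delta(\delta^{-1}a)$, so $[a]=[Pa]$ in cohomology; and $Pa$ is itself $\delta$-closed by (i), with $P(Pa)=Pa$. Hence $[a]\mapsto Pa$ and the inclusion $\mathrm{im}\,P\hookrightarrow\ker\delta$ induce mutually inverse isomorphisms, proving $H_{q,p}(\ww)\cong(\mathrm{im}\,P)_{q,p}$.

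It then remains to compute $\mathrm{im}\,P$ in each bidegree. By \eqref{P}, $Pa$ is the sum of a term $a(\nu,x,0,0)$ of bidegree $(0,0)$ and a term $\lambda\,(i_\xi a)(\nu,x,0,0)$ of bidegree $(0,1)$; in particular $(\mathrm{im}\,P)_{q,p}=0$ whenever $q>0$ or $p>1$, giving $H_{q,p}(\ww)=0$ there. In bidegree $(0,0)$ every $a\in C^\infty(M)[\![\nu]\!]$ satisfies $Pa=a$, since $i_\xi$ annihilates $0$-forms, so $(\mathrm{im}\,P)_{0,0}=C^\infty(M)[\![\nu]\!]$. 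In bidegree $(0,1)$ the first summand of $Pa$ drops out upon setting $dx=0$, leaving $Pa=\lambda\,i_\xi a$, so $(\mathrm{im}\,P)_{0,1}=\lambda\cdot C^\infty(M)[\![\nu]\!]\cong C^\infty(M)[\![\nu]\!]$ via $\lambda g\mapsto g$. Assembling these identifications yields the statement.

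Once \eqref{HD} is taken as given, the computation has no real obstacle; the single point that genuinely needs care — and the reason $H_{0,1}(\ww)$ comes out as $C^\infty(M)[\![\nu]\!]$ rather than, say, a de Rham group — is step (ii), namely that $\delta\ww_{1,0}$ consists only of $\xi$-transverse $1$-forms, which is forced by the transversality constraint \eqref{trans} built into the definition of $\ww$. If instead one did not wish to invoke \eqref{HD} as a black box, the delicate part would be re-deriving it: verifying that the case-split definition of $\delta^{-1}$ is consistent, which amounts to tracking the anomalous $-\lambda i_\xi a_{qp}$ term in the operator identity \eqref{N} across the low-degree pieces $\ww_{0,0}$, $\ww_{0,1}$, $\ww_{1,0}$ by means of the splitting $\ww=\ww^\xi\oplus\ww^\lambda$.
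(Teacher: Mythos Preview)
Your argument is correct and follows essentially the same route as the paper. The paper's proof is the short paragraph immediately preceding the proposition: from the decomposition \eqref{HD} one sees that every nontrivial $\delta$-cocycle lies in $\ker(1-P)=\mathrm{im}\,P$, and the paper then asserts that elements $a+\lambda b$ with $a,b\in C^\infty(M)[\![\nu]\!]$ are ``obviously non-trivial $\delta$-cocycles.'' Your step~(ii), the verification that $P\delta=0$ using the $\xi$-transversality constraint on $\ww_{1,0}$, is exactly what makes that ``obviously'' rigorous, so you have supplied the detail the paper leaves implicit rather than taken a different path.
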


Another class of derivations on $\ww$ originates from contact connections on $(M,\lambda)$. Specifically, to each contact connection $\nabla$ on $TM$ there corresponds a connection on the associated bundle of Weyl algebras $\mathcal{E}\rightarrow M$, which we denote by the same symbol $\nabla$. In every coordinate chart, it is defined as
$$
\nabla a =dx^i\wedge \left(\frac{\partial a}{\partial x^i}+y^j\Gamma_{ij}^k\frac{\partial a}{\partial  y^k}\right)\qquad \forall a\in \ww\,,
$$
$\Gamma_{ij}^k(x)$ being the Christoffel symbols. On account of the last equality in (\ref{id}), the following graded Leibniz rule holds:
$$
\nabla (a\circ b)=(\nabla a)\circ b+(-1)^{p}a\circ \nabla b
$$
for all $a\in\ww_{\bullet,p}$ and $b\in\ww$. Since the contact connection is symmetric, the 
operators $\nabla$ and $\delta$ anticommute to each other, i.e., $\nabla\delta+\delta\nabla=0$. Unlike $\delta$, the operator $\nabla$
squares to a nonzero derivation of the form 
$$
\nabla^2 a =\frac1\nu[R, a]\qquad \forall \, a\in \ww\,,
$$
where 
$$
R=\frac14 R_{ijkl}y^iy^jdx^k\wedge dx^l\,, \qquad  R_{ijkl} = \omega_{im} R^m_{jkl}\,,
$$
and $R_{jkl}^m$ is the  curvature tensor of  the contact connection. The Bianchi identity implies $\nabla R=0$.

As in the original Fedosov's construction, we will consider more general connections on the Weyl algebra bundle $\mathcal{E}$, namely, the connections of the form
\begin{equation}\label{D}
Da=\nabla a+ \frac1\nu[\gamma, a]\,,
\end{equation}
where $\gamma\in \ww_{\bullet,1}$. Note that the above formula specifies the one-form $\gamma$ only up to adding a central element  $\theta\in \Lambda^1(M)[\![\nu]\!]$. To avoid this ambiguity one can impose the {\it Weyl normalizing condition} $\gamma|_{y=0}=0$. Even with this restriction on $\gamma$, expression (\ref{D}) does not define the pair $(\nabla, \gamma)$ uniquely. If $\nabla'$ is another contact connection, then the difference $S=\nabla' -\nabla$ is a tensor of the form (\ref{S}). When applying these covariant derivatives to the elements of $\mathcal{S}^\xi(M)$, one may ignore the first term in (\ref{S}) and set  $S^k_{ij}=S_{ijl}\pi^{lk}$ for some tensor $S_{ijk}\in\mathcal{S}_3^\xi(M)$.  Let us define
\begin{equation}\label{gg}
\gamma'=\gamma+ \Delta\gamma \,, \qquad \Delta\gamma=\frac12 S_{ijk}y^iy^j dx^k\,.
\end{equation}
It is easy to see that the pair $(\nabla', \gamma')$ gives the same connection as $(\nabla,\gamma)$. Therefore, without loss in generality, we may assume the contact connection 
$\nabla$ to be chosen once and for all.    Then the covariant derivatives (\ref{D}) are in one-to-one correspondence with one-forms  $\gamma\in \ww_{\bullet,1}$ subject to the Weyl normalizing condition.

Squaring the operator $D$, one can  find that
\begin{equation}\label{WC}
D^2a=\frac1\nu[\Omega, a]\,,\qquad \Omega=R+\nabla \gamma+\frac 1\nu \gamma\circ \gamma
\end{equation}
for all $a\in \ww$. 
The two-form $\Omega$ is called the {\it Weyl curvature} of the connection $D$.  Again, it follows from the Bianchi identity that $D\Omega=0$. The connection $D$ is called {\it abelian} if 
$$
D^2a =\frac 1\nu[\Omega, a]=0 \qquad \forall a\in \ww\,.
$$
This implies that the curvature $\Omega$ of an abelian connection is a closed two-form of $\Lambda^2(M)[\![\nu]\!]$.  
Notice that while the shift of $\gamma$ by a central one-form $\theta$ does not affect the connection (\ref{D}), it changes the Weyl curvature by an exact two-form: $\Omega \rightarrow \Omega+d\theta$.  Therefore only the de Rham class $[\Omega]\in H^2(M)[\![\nu]\!]$ of the Weyl curvature of an abelian connection has an invariant meaning.   

As with symplectic manifolds, the most interesting to us are abelian connections of the form  
\begin{equation}\label{DF}
    D=-\delta +\nabla +\frac1\nu[r, \;\cdot\; ]=\nabla +\frac1\nu[\omega_{ij}y^idx^j+r, \;\cdot\;]\,,
    \end{equation}
where  $r\in F^2\ww$ and $r|_{y=0}=0$. In the following, we will refer to (\ref{DF}) as a {\it Fedosov connection}.
\begin{theorem} \label{T32}
For a given contact connection $\nabla$ and a series $\Omega=\omega+\sum_{k=1}^\infty\nu^k\Omega_{k}$, where $\Omega_{k}$ are closed two-forms on $M$,  
    there is a unique Fedosov connection
    with curvature  $\Omega$ such that  $\delta^{-1}r=0$.
\end{theorem}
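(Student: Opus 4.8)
The plan is to run Fedosov's iterative construction, now based on the contact‑adapted operators $\delta,\delta^{\ast},\delta^{-1}$ and on Proposition~\ref{P31} in place of the symplectic Poincar\'e lemma.

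I would first write the unknown connection in the form (\ref{DF}): $D=-\delta+\nabla+\tfrac1\nu[r,\,\cdot\,]$ with $\gamma=\gamma_0+r$, $\gamma_0=\omega_{ij}y^idx^j$, $r\in F^2\ww\cap\ww_{\bullet,1}$ and $r|_{y=0}=0$. Substituting into the Weyl‑curvature formula (\ref{WC}) and using that $\nabla\gamma_0=0$ (immediate in a Darboux chart, where it reduces to full symmetry of $\Gamma^m_{ki}\omega_{mj}$, and in any case forced by $\nabla\delta+\delta\nabla=0$), that $\gamma_0\circ\gamma_0$ is a central two‑form proportional to $\nu\omega$, and that $\tfrac1\nu[\gamma_0,r]=-\delta r$, the requirement $\Omega(D)=\Omega$ becomes the Fedosov equation
$$\delta r\;=\;R+\nabla r+\tfrac1\nu\,r\circ r-\bar\Omega\,,\qquad \bar\Omega:=\Omega-\omega=\sum_{k\ge1}\nu^k\Omega_k\in F^2\ww\,.$$
Applying $\delta^{-1}$, invoking the homotopy identity (\ref{HD}), and observing that $Pr=0$ (since $r$ has form degree one and vanishes at $y=0$, while $P$ only detects the $y$‑independent parts in form degrees zero and one), the normalization $\delta^{-1}r=0$ recasts this as the fixed‑point equation
$$r\;=\;\delta^{-1}\!\Big(R+\nabla r+\tfrac1\nu\,r\circ r-\bar\Omega\Big)=:\mathcal A(r)\,.$$

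Next I would solve the fixed‑point equation by iteration on the $y$‑filtration. The map $\mathcal A$ sends $F^2\ww\cap\ww_{\bullet,1}$ into itself and raises the filtration: $\delta^{-1}$ carries $F^k$ into $F^{k+1}$, the terms $R,\bar\Omega$ lie in $F^2$, $\nabla$ preserves the filtration, and for $r,r'\in F^2$ both $\tfrac1\nu r\circ r$ and $\tfrac1\nu(r\circ r-r'\circ r')$ lie in filtration degree at least that of $r$, resp.\ of $r-r'$. Writing $r=\sum_{k\ge2}r^{(k)}$ by total $y$‑degree, the degree‑$k$ part of the equation expresses $r^{(k)}$ in terms of $r^{(2)},\dots,r^{(k-1)}$ alone, with $r^{(2)}=0$; thus $r$ exists, is produced recursively level by level, and is uniquely determined, and completeness of $\ww$ with respect to the filtration makes $r=\sum_kr^{(k)}$ a bona fide element of $\ww$. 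Since the normalizations $r|_{y=0}=0$ and $\delta^{-1}r=0$ automatically hold for anything of the shape $\delta^{-1}(\cdots)$, this also proves uniqueness of the Fedosov connection.

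It remains to verify that this $r$ actually produces curvature $\Omega$ and not merely a solution of the fixed‑point equation; I expect this to be the one subtle point. Let $\hat\Omega=R+\nabla\gamma+\tfrac1\nu\gamma\circ\gamma$ be the true Weyl curvature of $D$ and $\Delta=\hat\Omega-\Omega$. Since $\Omega$ is a closed central two‑form, $D\Omega=d\Omega=0$, whereas $D\hat\Omega=0$ by the Bianchi identity; hence $D\Delta=0$, i.e.\ $\delta\Delta=\nabla\Delta+\tfrac1\nu[r,\Delta]$. At the same time the fixed‑point equation, the identity (\ref{HD}), $P\hat\Omega=P\Omega=0$ (form degree two, Proposition~\ref{P31}) and $(\delta^{-1})^2=0$ give $\delta^{-1}\Delta=0$; so (\ref{HD}) yields $\Delta=\delta^{-1}\delta\Delta=\delta^{-1}\big(\nabla\Delta+\tfrac1\nu[r,\Delta]\big)$. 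Since $\delta^{-1}$ strictly raises the $y$‑filtration while $\nabla$ and $\tfrac1\nu[r,\,\cdot\,]$ do not lower it, iterating shows $\Delta\in\bigcap_kF^k\ww=0$, so $\hat\Omega=\Omega$. The only place where the contact (as opposed to symplectic) nature enters nontrivially is the vanishing of $H_{q,2}(\ww)$ used here and the precise shape of $\delta^{-1}$ and $P$; the Leibniz rules, the identity $\nabla\gamma_0=0$, and the filtration bookkeeping are routine.
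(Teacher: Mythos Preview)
Your proof is correct and follows essentially the same approach as the paper: both derive the fixed-point equation $r=\delta^{-1}\big(R-(\Omega-\omega)+\nabla r+\tfrac1\nu r\circ r\big)$, solve it by iteration on the $y$-filtration, and then verify that the resulting connection really has curvature $\Omega$ by showing that the discrepancy $\Delta=\hat\Omega-\Omega$ satisfies $\delta^{-1}\Delta=0$, $P\Delta=0$, and (via the Bianchi identity) the homogeneous equation $\Delta=\delta^{-1}(D+\delta)\Delta$, whose only solution is zero. Two cosmetic remarks: your appeal to Proposition~\ref{P31} for $P\Delta=0$ is slightly off---this vanishes simply because $P$ annihilates everything of form degree two by its definition~(\ref{P}); and your symbol $\bar\Omega$ clashes with the paper's usage (there it denotes the actual Weyl curvature, not $\Omega-\omega$).
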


\begin{proof}   The Weyl normalizing condition implies that $r|_{y=0}=0$.  Hence, $Pr=0$.  
Eq. (\ref{WC}) for the Weyl curvature of the connection (\ref{DF}) takes the form
$$
\Omega=\omega+R-\delta r+\nabla r+\frac1\nu r\circ r\,.
$$
Instead of solving this equation directly, we apply the operator $\delta^{-1}$ and use decomposition (\ref{HD}) for $r$. This gives
\begin{equation}\label{r}
r=\delta^{-1}(\omega-\Omega+R)+\delta^{-1}\big(\nabla r+\frac1\nu r\circ r\big)\,.
\end{equation}
Since $\delta^{-1}(\omega-\Omega+R)\in F^3\ww$ and $\delta^{-1}$ increases the $y$-degree by one unit, we can solve this equation by iterations to get a 
unique solution for $r\in F^3\ww$. Clearly, $\delta^{-1}r=0$ and $r|_{y=0}=0$. It remains to show that the one-form $r$ obtained in such a way does define a desired abelian connection. Let $D$ be the connection (\ref{DF}) associated with $r$ and let $\bar \Omega$ denote its Weyl curvature 
$$\bar \Omega=\omega+R-\delta r+\nabla r+\frac1\nu r\circ r\,.$$
We apply the operator $\delta^{-1}$ to this equation to get
$$
\delta^{-1}(\bar\Omega-\Omega)=\delta^{-1}(\omega-\Omega+R)+\delta^{-1}\Big(\nabla r+\frac1\nu r\circ r\Big)-\delta^{-1}\delta r=r-\delta^{-1}\delta r=\delta\delta^{-1}r=0\,.
$$
Next, the Bianchi identity $D\bar\Omega=0$ implies $D(\bar\Omega-\Omega)=0$ because the form $\Omega$ is closed and central. 
We can write this as 
$$
\delta(\bar \Omega-\Omega)=(D+\delta)(\bar\Omega-\Omega)\,.
$$
Applying $\delta^{-1}$ yields
$$
\bar \Omega-\Omega=\delta^{-1}(D+\delta)(\bar\Omega-\Omega)
$$
because $\delta^{-1}(\bar\Omega-\Omega)=0$ and $P(\bar\Omega-\Omega)=0$. Since the operator $\delta^{-1}(D+\delta)$ raises the $y$-degree, the iteration method shows that this equation has a unique solution $\bar\Omega-\Omega=0$, cf. Lemma \ref{L43} below. Hence, the one-form $r$, defined by Eq.(\ref{r}), gives  an abelian  connection with curvature $\Omega$. 
\end{proof}
Solving Eq. (\ref{r})  by iterations, one can readily find 
\begin{equation}\label{rR}
r=\Big(\frac16R_{ijkl}y^iy^jy^k\xi^l-2\nu (\Omega_1)_{kl}y^k\xi^l\Big)\lambda+\frac18 R_{ijkl}P^k_mP^l_ny^iy^jy^mdx^n- \nu(\Omega_1)_{ij}P^i_kP^j_ly^kdx^l+\ldots\,,
\end{equation}
where dots stand for the terms of $y$-degree higher than three.  Hence, $r\in F^3\ww$.

Associated with an abelian connection $D$ is the {\it algebra of  flat sections} $\ww_D$. By definition, 
$$
\ww_D=\{a\in \ww_{\bullet,0}\;|\;Da=0\}\,.
$$
Since $D$ differentiates the $\circ$-product, the flat sections form a subalgebra in $\ww$.  The algebra $\ww_D$ is the main object of Fedosov quantization.

The group of all automorphisms of the associative algebra $\mathcal{W}$ contains an important subgroup $\G\subset \mathrm{Aut}(\mathcal{W})$, which is defined as follows. One first introduces an extension $\mathcal{W}^+$ of the algebra $\mathcal{W}$ by formal Laurent series in $\nu$. More precisely, the algebra $\mathcal{W}^+$ consists of formal power series 
\begin{equation}\label{LS}
\sum_{2k+q\geq 0}\nu^k a_{k,j_1\cdots j_qi_1\cdots i_p}(x) y^{j_1}\cdots y^{j_q}dx^{i_1}\wedge \cdots\wedge dx^{i_p}
\end{equation}
where only a finite number of coefficients $a_{k,j_1\cdots j_qi_1\cdots i_p}$ for the negative $k$'s are different from zero. The series (\ref{LS}) form a well-defined algebra w.r.t. the $\circ$-product above and the action of each abelian connection $D$ extends naturally from $\mathcal{W}$ to $\mathcal{W}^+$. The aforementioned group $\G$ consists of invertible elements of $\mathcal{W}^+$ of the form
\begin{equation}
    U=\exp_\circ\Big(\nu^{-1}H\Big)=\sum_{n=0}^\infty \frac{\nu^{-n}}{n!}\underbrace{H\circ H\circ\cdots \circ H}_{n}
\end{equation}
for $H\in F^3\mathcal{W}$. 
Applying the Campbell--Hausdorff formula shows that such elements indeed form a group. It is also clear that the map
\begin{equation}
    a\quad \mapsto\quad  U\circ a\circ U^{-1}=\sum_{n=0}^\infty \frac{\nu^{-n}}{n!}[H,[H,\cdots,[H,a],\cdots]]
\end{equation}
is an automorphism of the algebra $\mathcal{W}$. 

In general, an automorphism $U\in \G$ does not leave the subalgebra $\mathcal{W}_D\subset \mathcal{W}$ invariant. Nevertheless, it maps $\mathcal{W}_D$ to another subalgebra $\mathcal{W}_{D^U}$ associated with a new abelian connection  $D^U$ defined by the rule
\begin{equation}
    D^U a=U\circ D(U^{-1}\circ a\circ U)\circ U^{-1}=Da-[DU\circ U^{-1}, a]\,.
\end{equation}
Obviously, the abelian connections $D$  and $D^U$ have the same Weyl curvature. It is worth noting that the new connection $D^U$ may not satisfy the Weyl normalizing condition, even though the initial connection $D$ does.  

A natural question to ask is when two subalgebras $\mathcal{W}_D$, $\mathcal{W}_{D'}\subset \mathcal{W}$ 
associated with abelian connections $D$ and $D'$  are conjugate to each other by an automorphism $U\in \G$.   The following theorem answers this question for the case of Fedosov connections (\ref{DF}).

\begin{theorem}\label{DD'}
    Let $D$ and $D'$ be two Fedosov connections with curvatures $\Omega$ and $\Omega'$.  
    Then the subalgebras $\mathcal{W}_D$ and  $\mathcal{W}_{D'}$ are conjugate to each other by an automorphism $U\in \G$ iff 
$[\Omega]=[\Omega']$. 
\end{theorem}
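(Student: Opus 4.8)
### Proof strategy

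The plan is to prove the two directions separately, with the forward direction being essentially trivial and the backward direction carrying all the content.

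\emph{The easy direction.} If $\mathcal{W}_D$ and $\mathcal{W}_{D'}$ are conjugate by some $U\in\G$, then $D'=D^U$ up to an automorphism fixing the relevant data, and since conjugation by $U\in\G$ preserves the Weyl curvature (as noted just before the statement, $D^U$ and $D$ have the same curvature), we get $\Omega=\Omega'$ as honest two-forms, hence certainly $[\Omega]=[\Omega']$. A small point to check: a priori $D'$ and $D^U$ are two abelian connections with the same flat subalgebra and the same curvature; one should argue they actually coincide, or at least that this forces $\Omega=\Omega'$. The cleanest route is to observe that $DU\circ U^{-1}\in F^2\mathcal{W}$ lies in $y$-degree $\geq 2$, so $D^U$ is again of Fedosov type after absorbing a possible central correction, and then invoke the uniqueness in Theorem \ref{T32}.

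\emph{The hard direction.} Assume $[\Omega]=[\Omega']$, so $\Omega'-\Omega=d\theta$ for some $\theta\in\Lambda^1(M)[\![\nu]\!]$ with $\theta=O(\nu)$ (both series start at $\omega$). First I would reduce to the case $\Omega=\Omega'$: replacing $D$ by the Fedosov connection with curvature $\Omega'$ obtained by shifting $r$ appropriately changes the flat subalgebra only by conjugation — in fact shifting $\gamma$ by the central one-form $\theta$ does not change the connection at all but does change the curvature by $d\theta$, so one must instead re-solve Eq.~(\ref{r}) with $\Omega$ replaced by $\Omega'$ and show the two resulting flat subalgebras are $\G$-conjugate; this is where the normalizing condition $\delta^{-1}r=0$ is used to pin things down. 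Granting that reduction, it remains to show: two Fedosov connections $D,D'$ with the \emph{same} curvature $\Omega$ have $\G$-conjugate flat subalgebras. Write $\Delta r=r'-r\in F^3\mathcal{W}$. The goal is to produce $H\in F^3\mathcal{W}$ with $U=\exp_\circ(\nu^{-1}H)$ satisfying $D^U=D'$, i.e. $DU\circ U^{-1}=-\tfrac1\nu[\Delta r,\cdot\,]$-implementing element, which unwinds to the equation $D'U = U\circ(\text{something})$ — concretely one wants $Dr' - Dr$ to be an inner perturbation realized by $U$. The standard Fedosov argument: set up the equation $\delta H = \Delta r + \nabla H + \tfrac1\nu(\text{lower-order }\circ\text{-corrections in }H, r)$, apply $\delta^{-1}$, and solve by iteration on $y$-degree exactly as in the proof of Theorem \ref{T32}; the operator $\delta^{-1}(\ldots)$ raises $y$-degree, giving a unique $H\in F^3\mathcal{W}$ with $\delta^{-1}H=0$. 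One then checks $\exp_\circ(\nu^{-1}H)$ conjugates $D$ to $D'$ by differentiating and using $D^2=D'^2=0$.

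\emph{Main obstacle.} The genuinely delicate point is the curvature-shift reduction, because here the $\xi$-transverse / $\ww^\lambda$-decomposition of $\ww$ and the modified $\delta^{-1}$ (with its split behavior on $\ww^\xi$ versus $\ww^\lambda$, cf.\ Eq.~(\ref{N}) and the piecewise definition of $\delta^{-1}$) interact nontrivially with the closed two-form $\theta$, which need not be $\xi$-transverse. I expect one must track carefully that $d\theta$, though central in $\ww$, can have a nonzero $\ww^\lambda$-component, and that the iteration still converges with the shifted $\delta^{-1}$; the contracting-operator estimate is the same as in Theorem \ref{T32}, but verifying that the normalizing condition can still be imposed — i.e.\ that $P$ annihilates the relevant increments — is the step I would write out in full. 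Everything else (the Campbell–Hausdorff bookkeeping for $\G$, the Leibniz rule for $D$, the derivation property) is routine and parallels Fedosov's symplectic argument verbatim.
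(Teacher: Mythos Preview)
Your reduction to the equal-curvature case is tangled and nearly circular. You correctly observe that shifting $\gamma$ by the central one-form $\theta$ leaves $D$ unchanged while shifting $\Omega$ by $d\theta$; but that observation \emph{is} the reduction. The same connection $D$ now carries curvature $\Omega'$, and $\mathcal{W}_D$ has not moved. There is no need to ``re-solve Eq.~(\ref{r})'' or to show that two different Fedosov connections with curvatures $\Omega$ and $\Omega'$ have conjugate flat subalgebras --- that would be assuming the theorem. The paper dispatches this in one line: ``by making a shift $r\to r+\theta$ if necessary, $\Omega=\Omega'$.'' Your ``main obstacle'' concerning the $\xi$-transverse versus $\ww^\lambda$ splitting of $\theta$ is a red herring; $\theta$ is central, so the commutator $[\theta,\,\cdot\,]$ vanishes regardless of its decomposition, and the operator $\delta^{-1}$ never enters this step.

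For the core step (equal curvatures), your approach genuinely differs from the paper's. You attempt to solve for a single $H$ with $\exp_\circ(\nu^{-1}H)$ conjugating $D$ to $D'$ via one fixed-point equation; the paper instead proceeds \emph{inductively on filtration degree}. Writing $D'=D+\tfrac{1}{\nu}[\rho,\,\cdot\,]$ with $\rho\in F^n\ww$, the equality of curvatures forces $\delta\rho_n=0$ at lowest order, whence $\rho_n=\delta H$ by Proposition~\ref{P31}; conjugating by $U_1=\exp_\circ(\nu^{-1}H)$ pushes the discrepancy into $F^{n+1}\ww$, and one iterates to obtain $U=\prod_kU_k$. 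This avoids writing down the full nonlinear equation for $H$ (which, in your scheme, would require the Campbell--Hausdorff expansion of $DU\circ U^{-1}$ and is not the simple-looking ``$\delta H=\Delta r+\nabla H+\cdots$'' you wrote). Your direct method can be made to work, but the equation you display is not the correct one, and the paper's stepwise argument needs only the linear $\delta$-cohomology input at each stage, making it both shorter and more transparent.
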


\begin{proof}  The necessity of the equality $[\Omega]=[\Omega']$ has been proven above. Let us turn to sufficiency. We can assume, by making a shift $r\rightarrow r +\theta$ with $\theta\in \Lambda^1(M)[\![\nu]\!]$ if necessary, that
that $\Omega=\Omega'$. Write  
$$
D'a=Da+\frac1\nu[\rho, a]
$$
for some 
$
\rho=\rho_n+\rho_{n+1}+\cdots
$,
where $\rho_k\in \mathcal{W}_{k,1}$, $\rho|_{y=0}=0$, and $n\geq 2$.
Then the Weyl curvature of $D'$ takes the form
$$
\Omega'=\Omega+D\rho+\frac1\nu \rho\circ\rho\,.
$$
  To leading order this gives $\delta\rho_n=0$. Since $\rho$ is assumed to satisfy the Weyl normalizing condition, Proposition \ref{P31} ensures that $\rho_n=\delta H$ for some $H\in \mathcal{W}_{n+1,0}$. Define now the automorphism 
$
U_1=\exp_\circ(\nu^{-1}H)\in \G
$ and set $D_1=D^{U_1}$. By construction, 
$$
D' a=D^{U_1}a+\frac1\nu[\rho',a]
$$
for some $\rho'\in F^{n+1}\mathcal{W}$. Without loss in generality we may again assume that $\rho'|_{y=0}=0$. Replacing now $D$ with $D_1$ and repeating the above reasoning  for the pair of Fedosov connections $D'$ and $D_1$, we construct a new connection $D_2=D_1^{U_2}$
which differs from $D'$ by an element $\rho''$ of $F^{n+2}\mathcal{W}$. Define by induction $D_{n+1}=D_n^{U_{n+1}}$ and  set
$$
    U=\prod_{k=1}^\infty U_k\,.
    $$
    Then $D_\infty=D^U=D'$, and the algebras $\mathcal{W}_D$ and $\mathcal{W}_{D'}$ are isomorphic to each other. 
\end{proof}

\section{Deformation quantization of contact manifolds}

\begin{definition}\label{D41}
A {\it deformation quantization} of a contact manifold $(M, \lambda)$ is a pair $(\ast, \Delta )$, where $\ast$ and  $\Delta$ are, respectively, bilinear and linear operators on   $C^\infty(M)[\![\nu]\!]$ of the form
$$ a\ast b= ab+\frac\nu2\pi(da\wedge db)+\sum_{k=2}^\infty \nu^kC_k(a,b)\,,\qquad
    \Delta a=\xi a+\sum_{k=1}^\infty \nu^k\Delta_k a\,.$$
It is also supposed that 
\begin{enumerate}
    \item $C_k$ and $\Delta_k$ are  differential operators on $M$, which are extended to $C^\infty(M)[\![\nu]\!]$ by $\mathbb{R}[\![\nu]\!]$-linearity;
    \item  the elements of the subspace $\ker \Delta$ form an associative algebra w.r.t. the $\ast$-product, that is, 
$$
\Delta(a\ast b)=0\quad \mbox{and}\quad  (a\ast b)\ast c=a\ast(b\ast c)
$$
whenever $\Delta a=\Delta b=\Delta c=0$. 
\end{enumerate}
The pair $(\ker \Delta, \ast)$ is called the {\it algebra of quantum observables}. 
\end{definition}

\begin{definition}
Two deformation quantizations $(\ast, \Delta)$ and $(\ast', \Delta')$ of the same contact manifold $(M, \lambda)$ are said to be {\it equivalent}, if there 
exists a formally invertible operator 
\begin{equation}\label{G}
G=1+\sum_{k=1}^\infty \nu^kG_k\,,
\end{equation}
$G_k$ being linear differential operators on $M$,
such that 
$$
\Delta' =G^{-1}\Delta G\quad\mbox{and}\quad a\ast' b=G^{-1}(G a\ast Gb)\,.
$$
\end{definition}

The following lemma will be very useful in the sequel. 
\begin{lemma}\label{L43}
For any $a\in \ww$ the equation
\begin{equation}\label{ab}
    b=a+\delta^{-1}(D+\delta)b
\end{equation}
has a unique solution for $b\in \ww$. Furthermore, $Pb=Pa$.   
\end{lemma}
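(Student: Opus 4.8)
The plan is to solve equation (\ref{ab}) by iteration on the $y$-degree filtration, exactly as in the proof of Theorem~\ref{T32}. First I would observe that the operator $\delta^{-1}$ raises the $y$-degree by one, while the operators $D+\delta=\nabla+\frac1\nu[r,\,\cdot\,]$ are filtration-non-decreasing: $\nabla$ preserves the $y$-degree, and since $r\in F^3\ww$, the inner term $\frac1\nu[r,\,\cdot\,]$ raises it by at least one. Hence the composite operator $K:=\delta^{-1}(D+\delta)$ strictly raises the $y$-degree, i.e.\ $K(F^k\ww)\subset F^{k+1}\ww$. This is precisely the condition under which the affine map $b\mapsto a+Kb$ is a contraction for the filtration topology on the complete algebra $\ww$.

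Concretely, I would define the sequence $b^{(0)}=a$, $b^{(N+1)}=a+Kb^{(N)}$, and show by induction that $b^{(N+1)}-b^{(N)}=K(b^{(N)}-b^{(N-1)})\in F^{N}\ww$ (tracking the shift coming from $\delta^{-1}$ acting on the lowest-$y$-degree component of $a$). Since $\ww=\varprojlim W/F^k W$ is complete, the partial sums stabilize in each fixed $y$-degree and the limit $b=\lim_N b^{(N)}$ exists in $\ww$ and satisfies (\ref{ab}). For uniqueness: if $b_1,b_2$ both solve it, then $b_1-b_2=K(b_1-b_2)$, so $b_1-b_2\in\bigcap_k F^k\ww=0$, giving $b_1=b_2$.

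For the last claim $Pb=Pa$, I would apply the projector $P$ to (\ref{ab}). Using formula (\ref{P}) one sees $P\delta^{-1}=0$: indeed $\delta^{-1}$ produces terms of the form $\frac1{q+p}\delta^\ast(\cdots)$ with $q+p\ge 1$ (the $q=p=0$ component being sent to $0$), and $\delta^\ast=y^iP_i^j i_{\partial/\partial x^j}$ outputs only terms of $y$-degree $\ge 1$, none of which survive the substitution $y=0$ in $P$ — this needs a small check that the image also has no surviving $\lambda$-component, which follows from $\delta^\ast\lambda=0$ and $Pi_\xi=i_\xi P$. Granting $P\delta^{-1}=0$, applying $P$ to $b=a+\delta^{-1}(D+\delta)b$ immediately yields $Pb=Pa$.

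The main obstacle, such as it is, is the bookkeeping of the $y$-degree shifts: one must verify carefully that $K$ strictly increases the filtration degree, which hinges on $r\in F^3\ww$ (established in (\ref{rR})) and on the fact that $\delta^{-1}$ vanishes on the problematic low-degree pieces ($q+p=0$ for $\ww^\xi$ and $q+p=1$ for $\ww^\lambda$) rather than merely fixing them. Once that is in hand, convergence and uniqueness are formal consequences of completeness, and the identity $Pb=Pa$ reduces to the relation $P\delta^{-1}=0$.
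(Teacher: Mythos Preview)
Your proposal is correct and follows essentially the same approach as the paper: iterate using the fact that $\delta^{-1}(D+\delta)$ strictly raises the $y$-degree, then observe $P\delta^{-1}=0$ because the image of $\delta^{-1}$ contains only positive powers of $y$. One small over-caution: you say the degree-raising ``hinges on $r\in F^3\ww$,'' but in fact $r\in F^2\ww$ already makes $\frac1\nu[r,\,\cdot\,]$ filtration-non-decreasing, and it is $\delta^{-1}$ alone that supplies the strict increase; so the argument is slightly more robust than you indicate.
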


The proof is very simple. As the operator  $\delta^{-1}(D+\delta)$ raises the $y$-degree by 1, one can solve this equation by iterations to get a unique solution $b$. Taking now the projection of both the sides of (\ref{ab}) yields $Pb=Pa$, since the result of applying $\delta^{-1}$ contains only positive powers of $y$'s.    

The assignment $a\mapsto b$, provided by the lemma above,  defines an endomorphism of the space $\ww$. Restricting this endomorphism onto the subspace $C^{\infty}(M)[\![\nu]\!]$ gives an $\mathbb{R}[\![\nu]\!]$-linear map 
\begin{equation}\label{Q}
Q: C^{\infty}(M)[\![\nu]\!]\rightarrow \ww\,,\end{equation}
which we call the {\it quantum lift}. This map will be essential for all our subsequent considerations. It follows from Lemma (\ref{L43}) that the projection $P$ is left inverse to $Q$, i.e., $P Q=\mathrm{id}$. Hence, the map (\ref{Q}) is injective and defines an embedding of the space $C^\infty(M)[\![\nu]\!]$ into $\ww$. Let $\tilde\nabla_i=P_i^j\nabla_j$ denote the `$\xi$-transverse covariant derivative'. Then, iterating Eq. (\ref{ab}) for $a\in C^\infty(M)[\![\nu]\!]$, one can find 
\begin{equation}\label{Qa}
\begin{array}{rcl}
Qa&=&\displaystyle a+\tilde\nabla_iay^i+\frac12\tilde\nabla_i\tilde\nabla_ja y^iy^j+\frac16 \tilde\nabla_i\tilde\nabla_j\tilde\nabla_kay^iy^jy^k\\[3mm]&-&\displaystyle \frac1{24}R_{ijkl}P^k_n\pi^{lm}\nabla_may^iy^jy^n+
\nu(\Omega_1)_{ij}P^i_kP^j_ly^k\pi^{lm}\nabla_ma+\ldots\,,
\end{array}
\end{equation}
where the dots stand for the terms of higher $y$-degree. Notice that $Q1=1$. 

Define an operator $\Delta: C^\infty(M)[\![\nu]\!]\rightarrow C^\infty(M)[\![\nu]\!]$ as the composition of maps
 \begin{equation}\label{PDQ}
     \Delta=i_\xi PD Q\,.
     \end{equation}
Now we are ready to describe the structure of the space $\ww_D$.

\begin{theorem}
    Each element $a\in \ww_D$ is completely determined by its projection $Pa\in C^\infty(M)[\![\nu]\!]$. 
    An element $a_0\in C^\infty(M)[\![\nu]\!]$ is the projection of a flat section $a\in \ww_D$ if and only if $\Delta a_0=0$.  
\end{theorem}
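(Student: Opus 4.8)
The plan is to mimic the classical Fedosov argument for reconstructing flat sections from their symbols, with the extra bookkeeping forced by the decomposition $\ww=\ww^\xi\oplus\ww^\lambda$ and the projector $P$. First I would show the ``only if'' and reconstruction part: given $a\in\ww_D$, write $Da=0$ as $\delta a=\nabla a+\frac1\nu[r,a]$, add $\delta^{-1}\delta a$ to both sides of $a$ via the Hodge-type identity (\ref{HD}), and obtain
\begin{equation*}
a=Pa+\delta^{-1}\delta a=Pa+\delta^{-1}(D+\delta)a-\delta^{-1}Da=Pa+\delta^{-1}(D+\delta)a.
\end{equation*}
By Lemma \ref{L43} this equation has a unique solution for a given value of $Pa$, so $a$ is completely determined by $Pa$; moreover $Pa\in\ww_{\bullet,0}$ must lie in $C^\infty(M)[\![\nu]\!]$ since any flat section has form degree zero and $P$ then kills all $dx$'s and the $\lambda$-component reduces to a function times $\lambda$, which cannot be balanced — so in fact $Pa=i_\xi$-free, i.e. $Pa=a_0\in C^\infty(M)[\![\nu]\!]$. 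The point is that the solution of $a=a_0+\delta^{-1}(D+\delta)a$ is exactly $Qa_0$, the quantum lift, so $a=Qa_0$.

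Next, the converse and the constraint. For an arbitrary $a_0\in C^\infty(M)[\![\nu]\!]$, set $a=Qa_0$. Then $a$ satisfies $a=a_0+\delta^{-1}(D+\delta)a$ by definition of $Q$, and I want to decide when $Da=0$. The idea is to feed $Da$ back into a fixed-point equation: using $\nabla\delta+\delta\nabla=0$, $\delta^2=0$, $D\circ D\,(\cdot)=\frac1\nu[\Omega,\cdot]=0$ (abelian), and the Leibniz rule for $D$ and $\delta$, one computes $D\delta^{-1}(D+\delta)a$ and $\delta\delta^{-1}(D+\delta)a$ and rearranges to get an equation of the form
\begin{equation*}
Da=\delta^{-1}(D+\delta)(Da)+PDa.
\end{equation*}
Here the key identities are $D\delta+\delta D=\nabla\delta+\delta\nabla+\frac1\nu[\delta r+\cdots]=0$ only modulo curvature terms, so I must be careful: what actually holds is that $(D+\delta)$ is, up to $\frac1\nu[\Omega,\cdot]$ and hence zero on the abelian connection, a flat operator, giving $D\delta^{-1}+\delta^{-1}D$ a controllable form. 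By Lemma \ref{L43} again (now applied with $Da$ in place of $b$ and $PDa$ in place of $a$), the homogeneous equation $c=\delta^{-1}(D+\delta)c$ has only $c$ with $Pc$ prescribed; so $Da$ is the unique element with $P(Da)=PDa$, i.e. $Da$ is reconstructed from $PDa$. Therefore $Da=0$ if and only if $PDa=0$. Finally $PDa=PD Qa_0$ lies in $\ww_{\bullet,1}$, decomposes as $(PDQa_0)^\xi+\lambda\,(i_\xi PDQ a_0)$; a grading/degree count (the form-degree-one $\xi$-transverse part of a flat-type expression must vanish by the same $\delta^{-1}$-reconstruction argument restricted to $\ww^\xi$, cf.\ Proposition \ref{P31}) shows the $\ww^\xi$-piece is automatically zero, so $PDa=\lambda\,\Delta a_0$ with $\Delta=i_\xi PDQ$ as in (\ref{PDQ}). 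Hence $Da=0\iff\Delta a_0=0$.

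I would organize the writing so that the reconstruction map $a_0\mapsto Qa_0$ and its left inverse $P$ (already established: $PQ=\mathrm{id}$) carry most of the weight, and the flatness criterion is extracted by applying $D$ and re-deriving a Lemma \ref{L43}-type fixed point. The main obstacle I anticipate is the bookkeeping in the step $Da=\delta^{-1}(D+\delta)(Da)+PDa$: one has to verify that applying $D$ to $a=a_0+\delta^{-1}(D+\delta)a$ and commuting $D$ past $\delta^{-1}$ produces exactly this self-consistent equation, which requires the identities $\delta D+D\delta=0$ on $\ww$ (valid since $\nabla$ is symmetric and $\frac1\nu[\,\delta r\text{-term}\,]$ cancels appropriately), $D^2=0$, and the fact that $\delta^{-1}$ raises $y$-degree. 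A secondary subtlety is arguing that the $\ww^\xi$-component of $PDQa_0$ vanishes, so that the obstruction genuinely lands in the one-dimensional $\lambda$-direction and is measured by the scalar operator $\Delta$; this is where the contact-specific splitting $\ww=\ww^\xi\oplus\ww^\lambda$ and Proposition \ref{P31} (which says $H_{q,p}(\ww^\xi)=0$ for $p\ge 1$) do the real work, replacing the purely cohomological vanishing used in the symplectic case.
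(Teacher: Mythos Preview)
Your proposal follows the same overall Fedosov-style route as the paper, and the reconstruction step (first paragraph) is correct and essentially identical to the paper's argument. However, in the converse step you have misidentified where the work actually lies, and in two places you plan to fight battles that do not exist.

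First, the ``secondary subtlety'' you raise --- that the $\ww^\xi$-component of $PDQa_0$ must be shown to vanish via Proposition \ref{P31} --- is not a subtlety at all. By the very definition (\ref{P}), $Pc = c(\nu,x,0,0)+\lambda\,(i_\xi c)(\nu,x,0,0)$, and setting $dx=0$ annihilates anything of positive form degree. Hence for the one-form $c=Da$ one has $Pc=\lambda\,(i_\xi c)|_{y=0}=\lambda\,\Delta a_0$ immediately, with no cohomological input. The paper simply writes $PDa=\lambda\,i_\xi PDQa_0=\lambda\,\Delta a_0$ for exactly this reason.

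Second, your proposed route to the fixed-point equation $Da=\delta^{-1}(D+\delta)(Da)+PDa$ --- by ``commuting $D$ past $\delta^{-1}$'' and checking $\delta D+D\delta=0$ --- is unnecessarily circuitous, and you omit the one calculation that is actually needed. The paper instead applies the Hodge-type identity (\ref{HD}) directly to the one-form $Da$:
\[
Da=PDa+\delta\delta^{-1}(Da)+\delta^{-1}\delta(Da).
\]
Now (i) one verifies $\delta^{-1}Da=0$ \emph{separately}: from the defining relation $a=Pa+\delta^{-1}(D+\delta)a$ one gets $\delta^{-1}Da=\delta^{-1}(D+\delta)a-\delta^{-1}\delta a=a-Pa-\delta^{-1}\delta a=\delta\delta^{-1}a=0$, the last equality because $a$ has form degree zero; and (ii) $\delta(Da)=(D+\delta)(Da)$ follows at once from $D^2a=0$. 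Combining (i) and (ii) gives the desired fixed-point equation with no commutation of $D$ and $\delta^{-1}$. This small step $\delta^{-1}Da=0$ is the only genuine bookkeeping beyond Lemma \ref{L43}, and it is missing from your plan.
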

\begin{proof}
If $a$ is a flat section, then $\delta a=(D+\delta)a$. Applying the operator  $\delta^{-1}$ to both the sides of the last equation and using (\ref{HD}), one can find
\begin{equation}\label{Pa}
a=Pa+\delta^{-1}(D+\delta)a\,.
\end{equation}
By Lemma \ref{L43}, $a=QPa$. Hence, each flat section  $a$ is unambiguously restored by its projection $Pa$ with the help of the quantum lift (\ref{Q}).   This proves the first part of the theorem. Since each flat section $a$ obeys the equation $Da=0$, we can write 
$$
0=i_\xi P Da=i_\xi PDQPa=\Delta Pa\,. 
$$
Therefore the projection  $Pa$ of every flat section $a$ belongs to $\ker \Delta$. Conversely, given an element $a_0\in \ker \Delta$, define $a=Qa_0$. 
We claim that $a$ is a flat section. Indeed, using (\ref{Pa}) and (\ref{HD}), we find 
$$
\delta^{-1}Da=\delta^{-1}(D+\delta)a-\delta^{-1}\delta a= a-Pa-\delta^{-1}\delta a=\delta\delta^{-1}a=0\,.
$$
Next, $D^2a=0$ because $D$ is an abelian connection. Writing the last identity as $\delta Da=(D+\delta)Da$ and applying to it the operator $\delta^{-1}$, we get 
$$
\begin{array}{rcl}
Da&=&PDa+\delta^{-1}(D+\delta)Da=P^2 D Qa_0+\delta^{-1}(D+\delta)Da\\[3mm]
&=&\lambda i_\xi P D Qa_0+\delta^{-1}(D+\delta)Da=\lambda \Delta a_0+\delta^{-1}(D+\delta)Da=\delta^{-1}(D+\delta)Da\,.
\end{array}
$$
Here we used Eqs. (\ref{Pa}), (\ref{P}), and the assumption that $\Delta a_0=0$.  Hence, $Da$ has to satisfy the linear homogeneous equation $Da=\delta^{-1}(D+\delta)Da$. 
By Lemma \ref{L43}, $Da=0$.  
\end{proof}

\begin{cor}\label{C45}
As linear spaces, $\ww_D\simeq \ker \Delta$. 
\end{cor}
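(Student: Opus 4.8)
The plan is to exhibit a pair of mutually inverse $\mathbb{R}[\![\nu]\!]$-linear maps between $\ww_D$ and $\ker\Delta$, namely the restrictions of the projection $P$ and of the quantum lift $Q$ defined in (\ref{Q}). Note first that both sides of the claimed isomorphism are genuine linear subspaces: $\ker\Delta\subset C^\infty(M)[\![\nu]\!]$ because $\Delta$ is a linear operator, and $\ww_D\subset\ww_{\bullet,0}$ because $D$ is a linear operator differentiating the $\circ$-product.

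The core of the argument is then just an assembly of facts already in hand. First I would record that $P$ sends $\ww_D$ into $\ker\Delta$ and that $Q$ sends $\ker\Delta$ into $\ww_D$ — but these two inclusions are precisely the two implications of the theorem immediately preceding the corollary. Second, the identity $PQ=\mathrm{id}$ on $C^\infty(M)[\![\nu]\!]$ follows from Lemma \ref{L43}: the solution $b=Qa$ of equation (\ref{ab}) satisfies $Pb=Pa$, so in particular $PQ=\mathrm{id}$ on the subspace $\ker\Delta$. Third, the opposite identity $QP=\mathrm{id}$ on $\ww_D$ is exactly the first assertion of the same theorem, that each flat section $a$ is recovered from its projection via $a=QPa$. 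Together these say that $P|_{\ww_D}\colon\ww_D\to\ker\Delta$ and $Q|_{\ker\Delta}\colon\ker\Delta\to\ww_D$ are inverse to one another.

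It remains only to note linearity, which is immediate: $P$ is $\mathbb{R}[\![\nu]\!]$-linear by inspection of (\ref{P}), being built from evaluation at $y=dx=0$ together with multiplication by the fixed one-form $\lambda$, and $Q$ is $\mathbb{R}[\![\nu]\!]$-linear by construction (it is obtained by solving the linear equation (\ref{ab}) iteratively, as already observed after Lemma \ref{L43}); alternatively, $Q$ is a two-sided inverse of a linear bijection and hence automatically linear. This yields the linear isomorphism $\ww_D\simeq\ker\Delta$. There is no real obstacle in this corollary — it is essentially a repackaging of the preceding theorem — and the only point needing a moment's care is the bookkeeping check that $P$ and $Q$ genuinely restrict to maps between the two subspaces in question, which is what the theorem's two implications provide.
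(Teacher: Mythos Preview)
Your proposal is correct and matches the paper's approach: the paper gives no separate proof for this corollary, treating it as an immediate consequence of the preceding theorem, and your argument via the mutually inverse restrictions $P|_{\ww_D}$ and $Q|_{\ker\Delta}$ is exactly the content implicit in that theorem.
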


The quantization map (\ref{Q}) allows one to transfer the $\circ$-product on $\ww$ to the space $C^\infty(M)[\![\nu]\!]$. For all $a,b\in C^\infty(M)[\![\nu]\!]$ we set
\begin{equation}\label{star}
    a\ast b=P(Qa\circ Qb)\,.
\end{equation}
Notice that the $\ast$-product algebra $C^\infty(M)[\![\nu]\!]$ is not associative but 
contains an associative subalgebra $\ker \Delta$.  By Corollary \ref{C45}, the latter is isomorphic to the algebra of flat sections $\ww_D$. 

Now it is easy to check that the operators (\ref{PDQ}) and (\ref{star})  satisfy all the conditions of Definition \ref{D41},  thereby defining a deformation quantization of the contact manifold $(M, \lambda)$. 
We say that the deformation quantization defined by the quantum lift (\ref{Q}) is {\it canonical} if the Weyl curvature $\Omega$ of the underlying Fedosov connection $D$ is equal to $\omega$.

On substituting expressions  (\ref{rR}) and (\ref{Qa})  into (\ref{PDQ}), one can find that for the canonical deformation quantization
\begin{equation}\label{R}
    \Delta=\xi+\frac{\nu^2}{24}R_\xi^{ijk}\big(\tilde\nabla_i \tilde\nabla_j \tilde\nabla_k+\frac12R_{ijkl}\pi^{ln}\nabla_n \big)+O(\nu^3)\,,
\end{equation}
where 
$$
R_\xi^{ijk}=\frac13(R_{nmls}+R_{mlns}+R_{lnms})\xi^s\pi^{ni}\pi^{mj}\pi^{lk}\,.
$$
Equations $Q1=1$ and $D1=0$ imply $1\in \ker \Delta$. Therefore the $\ast$-product algebra $\ker \Delta$
is unital.

\begin{remark}
The Jacobi bracket (\ref{bracket}) on a contact manifold is a special case of {\it weak Poisson structures} introduced in  Ref. \cite{Lyakhovich_2005}. In that paper, a systematic method was developed for the deformation quantization of weak Poisson structures by means of Kontsevich's formality theorem. 
The method results in an $A_\infty$-algebra $(A, m_0, m_1, m_2, \ldots)$, which may be generally non-flat. For the Jacobi bracket (\ref{bracket}) the zeroth structure map $m_0$, representing an obstruction to quantization,  should vanish from dimensional considerations. Then the first structure map $m_1: A\rightarrow A$ squares to zero and makes $A$ into a cochain complex. (In the physical terminology, $m_1$ is a quantum BRST operator, while $m_0$ represents a quantum anomaly.) The cohomology of the differential $m_1$ in degree zero, $H^0(A)$, is identified with the space of quantum observables. The second structure map $m_2$ endows $A$ with a bilinear product, which is differentiated by $m_1$. On passing to cohomology,  the (non-associative) product $m_2$ induces a desired associative multiplication in the space of physical observables $H^0(A)$.  
One may see that the equation $\Delta a=0$ is essentially equivalent to and results from the cocycle condition $m_1(a)=0$ in the algebra $A$. In principle, this allows one to reconstruct the differential operator (\ref{R}) in terms of Kontsevich's integrals.   However, Fedosov's approach  offers a more direct and simpler way to solve this problem. 
\end{remark}

Let us examine more closely the space of quantum observables $\ker \Delta$. On substituting the power series expansions for the operator $\Delta$ and an element $a\in C^{\infty}(M)[\![\nu]\!]$, 
$$\Delta=\xi+\nu^2\Delta_2+\cdots \,,\qquad a=a_0+\nu a_1+\nu^2 a_2+\cdots\,,$$ into  $\Delta a=0$, we obtain an infinite sequence of equations 
\begin{equation}\label{qcor}
\xi a_0=0\,,\qquad \xi a_1=0\,,\qquad \xi a_2+\Delta_2 a_0=0\,,\qquad\ldots
\end{equation}
The first equation says that the leading term $a_0$ is given by a $\xi$-invariant function on $M$. As discussed in Sec. \ref{S2} such functions form a Poisson algebra  $C^\infty_\xi(M)$ w.r.t. the Jacobi bracket (\ref{bracket}). The Poisson algebra $(C^\infty_\xi(M), \{\,\cdot\,,\,\cdot\,\})$ is naturally identified with the algebra of {\it classical observables}. One may also regard the higher powers of $\nu$ in  $a$ as `quantum corrections' to a classical observable $a_0$. The corrections are not arbitrary but satisfy the infinite 
system of linear equations (\ref{qcor}). The role of these equations is twofold: they provide the closeness 
of the subspace $\ker \Delta\subset C^\infty(M)[\![\nu]\!]$ under the $\ast$-product (\ref{star}) {\it and}  the associativity of the $\ast$-product algebra $\ker \Delta$.
In contrast to the deformation quantization of symplectic manifolds, not every classical observable $a_0$ on a contact manifold admits a consistent quantization, because the solvability of system (\ref{qcor}) is not ensured in advance. 
A natural question arises: whether the property of a classical observable to be or not to be quantizable depends on the choice of a contact connection that enters the operator $\Delta$? The answer is no, since all deformation quantizations turn out to be equivalent to each other whenever they correspond to the same Weyl curvature. By Theorem \ref{DD'}, two Fedosov connections $D$ and $D'$ associated with different contact connections $\nabla$ and $\nabla'$ but the same Weyl curvature $\Omega$ give rise to the isomorphic algebras of flat sections $\mathcal{W}_D$ and $\mathcal{W}_{D'}$ that are related by an automorphism $U\in \G$ of  $\ww$. The algebras $\mathcal{W}_D$ and $\mathcal{W}_{D'}$, in their turn,  are isomorphic to the algebras $\ker \Delta$ and $\ker \Delta'$, respectively (Corollary \ref{C45}).  This allows us to define the desired equivalence  (\ref{G}) between the deformation quantizations $(\ast, \Delta)$ and $(\ast', \Delta' )$ as the composition of maps
\begin{equation}\label{Ga}
Ga=P(U\circ Qa\circ U^{-1})\qquad \forall a\in C^\infty(M)[\![\nu]\!]\,.
\end{equation}
On restricting to $\ker \Delta$ it gives the commutative diagram 
$$
\xymatrix{\mathcal{W}_D\ar[d]_U &\ker\Delta \ar[l]_Q\ar[d]^G\\
\mathcal{W}_{D'}\ar[r]^P&\ker\Delta'}
$$
where the right arrow defines an isomorphism of the algebras of quantum observables. 

As discussed in Sec.~2, the difference of two contact connections on $\mathcal{T}^\xi(M)$ is given by a totally symmetric $\xi$-transverse tensor $S_{ijk}$, see Eq. (\ref{Sdef}). 

With the recurrent relations above it is not hard to find that  for the canonical deformation quantization ($\Omega=\omega$) the automorphism $U$ and the equivalence operator (\ref{Ga}) take the form
$$ 
 U=\exp_\circ (\nu^{-1}H)\,,  \qquad G =1+ \nu^2G_2+O(\nu^3)\,,
$$
where
$$
H=\frac1{6}S_{ijk}y^iy^jy^k- \frac1{24} \tilde \nabla_n S_{ijk}y^iy^jy^ky^n +  \frac{\nu^2}{32} S^{ijk} R_{ijkl} P^l_{n} y^{n} + \dots\,,\quad S^{ijk}=S_{lnm}\pi^{li}\pi^{nj}\pi^{mk}\,,
$$
and
\begin{equation}
   G_2=\frac{1}{24}S^{ijk} \Big(\tilde \nabla_i\tilde \nabla_j \tilde \nabla_k + \frac32 S^l_{ij} \tilde \nabla_k \tilde\nabla_l + \frac12 R_{ijkl} \pi^{ln} \nabla_{n} - \frac14  \nabla_l S_{ijk} \pi^{ln} \nabla_{n}   \Big) \,.
\end{equation}
As a consequence 
\begin{equation}\label{d'}
\Delta'=G^{-1}\Delta G=\Delta+\nu^2[\xi, G_2]+O(\nu^3)\,.
\end{equation}
Straightforward calculations yield
\begin{equation}
        \Delta'_2 =  \frac{1}{24} \big( R^{ijk}_\xi + \nabla_\xi S^{ijk} \big) \Big( \tilde\nabla_i \tilde\nabla_j \tilde\nabla_k + \frac12 R_{ijk}{}_m \pi^{mn} \nabla_{n} - \nabla_l S_{ijk} \pi^{ln} \nabla_{n} \Big) \,.
\end{equation}
One could also arrived at this expression by a direct change of connection $\nabla\rightarrow \nabla+S$ in (\ref{R}).

\section{The first  obstruction to quantization}

Each classical observable $a_0\in C^\infty_\xi(M)$ gives rise to a chain of equations (\ref{qcor}) for quantum corrections $a_1$, $a_2$, ..., needed to promote $a_0$ to the quantum level.  A classical observable $a_0$ is said to be {\it quantizable} if system (\ref{qcor}) admits a solution. Otherwise, one speaks of obstructions to quantization. Evidently, the first two equations in (\ref{qcor}) define classical observables $a_0$ and $a_1$. Given $a_0$, one can always satisfy the second equation by setting $a_1=0$ (no first-order corrections). The first source of obstructions comes with the third equation  for the second-order corrections. It says that the function $-\Delta_2 a_0$ should be the gradient of some function $a_2$ along the Reeb vector field $\xi$. 
This motivates us to introduce the subspace $\mathrm{Im}\,\xi\subset C^\infty(M)$ and the corresponding quotient space $\mathcal{O}=C^\infty(M)/\mathrm{Im}\, \xi$. It is convenient to combine the mentioned spaces and maps into the exact sequence 
$$
\xymatrix{ 0\ar[r]&C^\infty_\xi(M)\ar[r]&C^\infty(M)\ar[r]^{\xi}&C^\infty(M)\ar[r]&\mathcal{O}\ar[r]&0}\,.
$$
The space $\mathcal{O}$ is generally nonzero and accommodates  all potential obstructions to the solvability  of the equation
\begin{equation}\label{1o}
-\Delta_2 a_0=\xi a_2\,.
\end{equation}

Recall that integral curves of the Reeb vector field $\xi$ are called {\it characteristics} of a contact manifold $(M,\lambda)$. The most interesting are closed characteristics, that is, periodic orbits of the Reeb flow. The famous Weinstein's  conjecture states that  every compact manifold $M$ enjoys at least one closed characteristic. By now the Weinstein conjecture is a theorem for a wide class of contact manifolds, see e.g. \cite[Ch. 3.4]{Blair}.
To each closed characteristic $\gamma$ there corresponds its period  
\begin{equation}\label{t-per}
\tau_\gamma=\int_\gamma \lambda\in \mathbb{R}\,.
\end{equation}
The periods of closed characteristics are thus invariants of contact manifolds. 

Given a  closed characteristic $\gamma\subset M$, consider the linear functional 
$$
\Psi_\gamma[a]=\int_\gamma\lambda \Delta_2 a\,.
$$
To write this functional one needs to choose a contact connection, which enters the operator $\Delta_2$. However, the dependence of the connection drops out if we restrict the functional to the subspace of classical observables $C^\infty_\xi(M)$. Indeed, using Eq. (\ref{d'}) together with the simple identities 
$$
\xi a=\lambda i_\xi da=da -i_\xi (\lambda\wedge da)\,,\qquad i_\xi\alpha|_\gamma=0\,,
$$
which hold for all $a\in C^\infty(M)$ and $\alpha\in \Lambda^2(M)$,  
we readily find that
\begin{equation}\label{psi'}
\Psi_\gamma'[a_0]=\int_\gamma\lambda\Delta'_2a_0= \Psi_\gamma [a_0]+\int_\gamma \lambda [\xi, G_2]a_0=\Psi_\gamma[a_0]+\int_\gamma\lambda\xi (G_2a_0)=\Psi_\gamma[a_0]
\end{equation}
whenever $a_0\in C^\infty_\xi(M)$. We will refer to the number $\Psi_\gamma[a_0]$ as the {\it characteristic of a classical observable} $a_0$ (associated with a closed characteristic $\gamma\subset M$). Integrating   Eq. (\ref{1o}) over  closed characteristics, we conclude that all characteristics of  quantizable classical observables must vanish. Thus, nonzero numbers $\Psi_\gamma[a_0]$ represent the first 
obstructions to the quantization of $a_0\in C^\infty_\xi(M)$.  

For the rest of this section let us suppose that the contact manifold $M$ is compact. Then we can equip the space of smooth functions $C^\infty(M)$ with a positive-definite inner product: 
$$
(a,b)=\int_Mv ab \,\in \mathbb{R}\,,
$$
$v$ being the canonical volume form on $M$. Since $L_\xi v=0$,  Stokes' theorem gives immediately  $(a,\xi b)=-(\xi a,b)$;  and hence,   $C^\infty_\xi(M)\perp \mathrm{Im}\,\xi$. Taking the inner product of  Eq. (\ref{1o}) 
with an arbitrary $b\in C^{\infty}_\xi(M)$ leads to the equality $(b,\Delta_2 a_0)=0$. For $b=1$ we obtain
\begin{equation}\label{ch-a}
\begin{array}{c}
(1, \Delta_2 a_0)=-\frac{1}{24}(\chi, a_0)=0\,,
\end{array}
\end{equation}
where $\chi=-24\Delta^\ast_2 1$  and $\Delta^\ast_2$ is the formal adjoint of the operator $\Delta_2$. Explicitly,
\begin{equation}\label{char}
\chi=\tilde \nabla_i\tilde \nabla_j\tilde \nabla_kR_\xi^{ijk}+\frac12\nabla_n (R_\xi^{ijk}R_{ijkl}\pi^{ln})\,.
\end{equation}
Eq. (\ref{ch-a}) says that each quantizable classical observable must be orthogonal to the function (\ref{char}). The transformation property of $\chi$ under the change of connection, $\nabla\rightarrow \nabla+S$, follows immediately from that of the operator $\Delta_2$. Using (\ref{d'}), one can find
\begin{equation}\label{chi'}
    \begin{array}{c}
        \chi'=\chi+\xi \psi\,,\\[3mm]
        \displaystyle \psi=\tilde\nabla_i\tilde\nabla_j\tilde\nabla_k S^{ijk}-
        \frac{3}{2}\tilde\nabla_k\tilde\nabla_l(S^{ijk}S_{ij}^l)+\frac12\nabla_n(S^{ijk}R_{ijkl}\pi^{ln})-\frac14\nabla_n(S^{ijk}\pi^{nl}\nabla_lS_{ijk})\,.
    \end{array}
\end{equation}
Formula (\ref{char}) defines then a unique element $[\chi]\in \mathcal{O}$ -- the equivalence class of $\chi$ modulo $\xi$-gradients -- which we will call the {\it character} of a contact manifold\footnote{Notice that one may omit the tildes over covariant derivatives in (\ref{R}) whenever the operator $\Delta_2$  applies  to $a\in C^\infty_\xi(M)$. In view of this, omitting tildes in  (\ref{char}) gives an equivalent element of the class $[\chi]$.}. The character $[\chi]$, being independent of the choice of a contact connection, provides an  invariant of a contact manifold $(M,\lambda)$ itself.   This also means the invariance of the orthogonality condition (\ref{ch-a}) under the choice of a representative of $[\chi]$. Notice that $[1]\neq 0$ whenever $M$ is compact. In spite of a bit abstract definition, the character gives rise to a set of more handy invariants, which are similar to the periods (\ref{t-per}). For each closed characteristic $\gamma\subset M$ we put
\begin{equation}\label{ch-per}
\chi_\gamma=\int_\gamma \lambda  \chi \,.
\end{equation}
It is clear that the value of the integral does not depend on a particular representative $\chi\in [\chi]$; and hence, on a contact connection.  We call the numbers (\ref{ch-per}) the {\it $\chi$-periods} of a contact manifold. As the $\tau$-periods (\ref{t-per}) they make sense for non-compact manifolds as well\footnote{Each closed characteristic $\gamma\subset M$ gives rise to a linear functional $\Phi: \mathcal{O}\rightarrow\mathbb{R}$ defined by the integral  $\Phi[a]=\int_\gamma \lambda a$.  The $\tau$- and $\chi$-periods are just special values of this functional that correspond to the pair of distinguished  elements $[1], [\chi]\in \mathcal{O}$.}. 
It would be interesting to further explore the geometric meaning of the character and $\chi$-periods as well as their relation, if any, to other known invariants of contact manifolds. 

Remember,  the existence of a solution to Eq. (\ref{1o}) does not yet imply that a classical observable $a_0$ is quanizable. For this to happen, all higher equations in (\ref{qcor}) must  hold as well.  This leads to higher obstructions, which may also depend on choices made for the lower order quantum corrections.  Very quickly, the analysis becomes extremely cumbersome. In this regard, it is pertinent to note a simple geometric condition that ensures the absence of any obstruction. This is the case of a $\xi$-invariant contact connection, i.e., $L_\xi\nabla=0$. Although invariant contact connections are quite rare,  Lichnerowicz has shown \cite{LICHNEROWICZ1982} how to construct such a connection starting form any affine connection on $M$ invariant under the Reeb flow $\xi$. 
Since the covariant curvature tensor of an invariant contact connection is totally $\xi$-transverse together with all its covariant derivatives,
$
\nabla_{i_1}\cdots \nabla_{i_n}R_{ijkl}\in \mathcal{T}^\xi_{n+4}(M)\,,
$
the one-form $r$ defining the Fedosov connection (\ref{DF}) may be chosen to satisfy the $\xi$-transversality condition $i_\xi r=0$ as well. As a result, all the quantum corrections in (\ref{R}) are absent and the whole operator reduces to $\Delta=\xi$. Hence, any classical observable is quantizable in this case. This conclusion is in line with the previous results on quantization of presymplectic manifolds \cite{GorElfSha:2020,Vaisman_2002,block1992quantization}. For an invariant contact connection  the operator $\Delta$ is just a derivation of the  $\ast$-product,
$$
\xi(a\ast b)=(\xi a)\ast b+a\ast(\xi b)\,,
$$
and the closeness of the subspace $\ker \Delta=C^\infty_\xi(M)[\![\nu]\!]$ under the $\ast$-product becomes obvious.

\end{document}